\documentclass[11pt,letterpaper]{article}

\usepackage{amsmath,amsthm,nicefrac}
\usepackage{amsfonts, amstext}
\usepackage{bm}

\usepackage{subcaption}
\usepackage{hhline}
\usepackage{soul}
\usepackage[table,dvipsnames]{xcolor}

\usepackage{typearea}
\typearea{14}
\usepackage[font={small,it}]{caption}

\usepackage{setspace}

\usepackage[shortlabels]{enumitem}
\usepackage[breaklinks]{hyperref}
\hypersetup{colorlinks=true,
            citebordercolor={.6 .6 .6},linkbordercolor={.6 .6 .6},
citecolor=blue,urlcolor=blue,linkcolor=blue,pagecolor=black,breaklinks}

\usepackage[nameinlink,capitalize]{cleveref}
\Crefname{algocf}{Algorithm}{Algorithms}
\crefname{algocfline}{line}{lines}
\Crefname{invariant}{Invariant}{Invariants}
\Crefname{claim}{Claim}{Claims}
\Crefname{corollary}{Corollary}{Corollaries}
\Crefname{subclaim}{Subclaim}{Subclaims}

\usepackage{epsfig}
\usepackage{amsthm,amssymb}

\usepackage{mathrsfs}
\usepackage{xspace}
\usepackage{soul}
\usepackage{latexsym}
\usepackage{bbm}
\usepackage{dsfont}

\usepackage{accents}

\usepackage{framed}

\definecolor{DarkGray}{rgb}{0.66, 0.66, 0.66}
\definecolor{DarkPowderBlue}{rgb}{0.0, 0.2, 0.6}
\definecolor{fluorescentyellow}{rgb}{0.8, 1.0, 0.0}
\definecolor{cerulean}{rgb}{0.0, 0.48, 0.65}
\definecolor{bleudefrance}{rgb}{0.19, 0.55, 0.91}

\usepackage[ruled,vlined,algonl]{algorithm2e}
\SetEndCharOfAlgoLine{}
\SetKwComment{Comment}{\footnotesize$\triangleright$\ }{}

\SetCommentSty{mycommfont}

\usepackage{thmtools,thm-restate}

\allowdisplaybreaks

\newtheorem{theorem}{Theorem}[section]
\newtheorem{lemma}[theorem]{Lemma}

\newtheorem{observation}[theorem]{Observation}

\theoremstyle{definition}
\newtheorem{defn}[theorem]{Definition}

\theoremstyle{remark}

\usepackage{tikz}
\usepackage{pgfplots}
\pgfplotsset{compat=newest}

\tikzset{
vertex/.style={circle, draw, fill=black, black, inner sep=0pt, minimum width=9pt},
asg/.style={line width=1pt},
}

\usepackage{comment}

\usepackage[
 backend=biber,
 style=alphabetic,
 citetracker,
 hyperref=auto,
 maxcitenames=5,
 sortcites,
 sorting=nyt,
 maxbibnames=12,
 date=year,
 isbn=false,
 url=false,
 doi=false,
 eprint=false,
]{biblatex}

\newbibmacro{string+doiurlisbn}[1]{
  \iffieldundef{doi}{
    \iffieldundef{url}{
      #1
    }{
      \href{\thefield{url}}{#1}
    }
  }{
    \href{http://dx.doi.org/\thefield{doi}}{#1}
  }
}

\DeclareFieldFormat
[article,inbook,incollection,inproceedings,patent,thesis,unpublished]
  {title}{\usebibmacro{string+doiurlisbn}{#1}}

\usepackage{titlesec}
\usepackage{lipsum}
\usepackage{float}
\titlespacing{\paragraph}{
  0pt}{
  0.3\baselineskip}{
  1em}

\usepackage[normalem]{ulem}

\newcommand{\eps}{\varepsilon}

\newcommand{\factor}{\smash{\ceil{\nicefrac{1}{\varepsilon}}}}

\newcommand{\oneminusA}{\eps}

\newcommand{\threshold}{\frac{\varepsilon}{1-\varepsilon}\xspace}

\newcommand{\ALG}{{\ensuremath{\mathrm{SLF}}}\xspace}
\newcommand{\SLF}{\ALG}

\newcommand{\alg}{\ALG}
\newcommand{\opt}{\ensuremath{\mathrm{OPT}}\xspace}
\newcommand{\OPT}{\opt}

\newcommand{\vol}{\operatorname{vol}}

\newcommand{\ceil}[1]{\left\lceil #1 \right\rceil}

\renewcommand{\emptyset}{\varnothing}

\usepackage{xcolor}
\definecolor{job1}{HTML}{e69f00}
\definecolor{job2}{HTML}{009e74}
\definecolor{job3}{HTML}{56b3e9}
\definecolor{job4}{HTML}{cc79a7}
\definecolor{job5}{HTML}{f0e442}
\definecolor{job6}{HTML}{d55e00}
\definecolor{job7}{HTML}{0071b2}

\newcommand\machine{}
\def\machine[#1](#2,#3)(#4,#5){
  \draw[black,#1,line width=2pt] (#4,#3) -- (#2,#3) -- (#2,#5) -- (#4,#5);
}

\newcommand\interval{}
\def\interval[#1](#2,#3){
  \draw[line width=1pt,dotted,black,#1] (#2,0) -- (#2,#3);
}

\newcommand\rectjob{}
\def\rectjob[#1](#2,#3)(#4,#5){
  \draw[black,fill=#1,line width=0.7pt] (#2,#3) rectangle ++(#4,#5);
}

\newcommand\rectjobT{}
\def\rectjobT[#1](#2,#3)(#4,#5)(#6){
  \draw[black,fill=#1,line width=0.7pt] (#2,#3) rectangle ++(#4,#5);
  \node[#1,black] at (#2+#4/2.0,#3+#5/2.0) {{#6}};
}

\newcommand\matchingjob{}
\def\matchingjob(#1,#2)(#3,#4){
  \draw[black,fill=none,line width=1.2pt] (#1,#2) rectangle ++(#3,#4);
}

\newcommand\partialjob{}
\def\partialjob(#1,#2)(#3,#4){
  \draw[black,dashed,fill=none,line width=1.2pt] (#1,#2) rectangle ++(#3,#4);
}

\newcommand\rectjobnob{}
\def\rectjobnob[#1](#2,#3)(#4,#5){
  \draw[white,fill=#1,line width=0.0pt] (#2,#3) rectangle ++(#4,#5);
}

\title{A Simpler Analysis for $\eps$-Clairvoyant Flow Time Scheduling}

\author{
   Anupam Gupta\thanks{
       New York University, \texttt{anupam.g@nyu.edu}. 
   } \and Haim Kaplan\thanks{
       Tel Aviv University,
       \texttt{haimk@post.tau.ac.il}
   }  \and Alexander Lindermayr\thanks{
       Institut für Mathematik, Technische Universität Berlin,
       \texttt{alexander.lindermayr@tu-berlin.de}
   }  \and Jens Schlöter\thanks{
       Centrum Wiskunde \& Informatica (CWI),
       \texttt{jens.Schloter@cwi.nl}
   } \and
        Sorrachai Yingchareonthawornchai\thanks{
       Institute for Theoretical Studies, ETH Zürich,
       \texttt{sorrachai.yingchareonthawornchai@eth-its.ethz.ch}
   }}

\date{}
\begin{document}

\maketitle

\begin{abstract}
We simplify the proof of the optimality of the Shortest Lower-Bound First (SLF) algorithm, introduced by Gupta, Kaplan, Lindermayr, Schlöter, and Yingchareonthawornchai [FOCS'25],  for minimizing the total flow time in the $\eps$-clairvoyant setting.
\end{abstract}

\section{Introduction}

We consider the classical problem 
of minimizing the total flow time 
in the online setting where jobs arrive over time
and preemption is allowed. The flow time of a job is the total time
that a job spends in the system. We study the objective
of minimizing the total flow time of all the jobs on a single machine.

In \cite{GuptaKLSY25}, the authors of the present paper
studied this problem
in the $\eps$-clairvoyant setting, which is defined as follows. 
When a job $j$ arrives, we know nothing about its processing time $p_j$. However, the value $p_j$
is revealed when an $\eps$-fraction of job $j$ remains to be processed, for a constant $\eps \in [0,1]$. 
They showed that a natural algorithm, called Shortest Lower-Bound First (SLF), achieves a competitive ratio of at most $\factor$ when $\eps > 0$, and this bound is best possible for deterministic algorithms.

In this technical note, we present a simplified analysis
of SLF, which yields the same tight bound on the competitive ratio. For the remaining aspects of this model and its
connection to related work, we refer to \cite{GuptaKLSY25}.
In the remainder of this note, we first present the SLF algorithm and then prove the following theorem.
\begin{theorem}[\cite{GuptaKLSY25}]\label{thm:main}
	For any constant $\eps \in (0,1]$, there is a deterministic
	$\eps$-clairvoyant algorithm SLF that is $\factor$-competitive
	for the objective of minimizing the total flow time on a single machine.
\end{theorem}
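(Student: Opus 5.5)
We first pin down SLF. At time $t$, every unfinished job $j$ has a lower bound $\ell_j(t)$ on its processing time. While $p_j$ is still unknown, $\ell_j(t)$ is the amount processed so far divided by $(1-\eps)$, since $p_j$ is revealed only once the processed amount reaches $(1-\eps)p_j$. Once $p_j$ is revealed, $\ell_j(t)=p_j$. SLF always processes the job with the smallest current lower bound. Ties are broken toward revealed jobs and then by arrival, and jobs with equal lower bound that are still unrevealed share the machine round-robin style. SLF is work-conserving, so its remaining total volume always equals that of SRPT. We bound the total flow time by integrating the number of alive jobs, $\int |A^{\SLF}(t)|\,dt$, and compare against $\opt \ge \int |A^{\mathrm{SRPT}}(t)|\,dt$.

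\textbf{Step 1: local counting inequality.} The plan is to show that at every time $t$,
\[
|A^{\SLF}(t)| \le \factor \cdot |A^{\opt}(t)|,
\]
or a charging version of it. For a threshold $x$, let $V^{\SLF}_{\le x}(t)$ be the remaining work of SLF jobs whose lower bound is at most $x$, and define the analogous quantity for OPT using remaining processing times. The key invariant we aim for is a volume comparison: for every $x$, the work SLF still owes on jobs with $p_j\le x$ exceeds OPT's corresponding volume by at most the work on at most $\factor-1$ partially processed jobs per "scale". We derive this from the greedy rule. Consider the last time $s\le t$ before which SLF only processed jobs with lower bound at most $x$. Until $s$ the two schedules do the same total work, and jobs arriving after $s$ are handled by SLF in order of lower bound. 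An unrevealed job that has started has lower bound $q/(1-\eps)$, where $q$ is its processed amount. So SLF can have many partially processed unrevealed jobs whose lower bounds lie in a band $[y,y/(1-\eps)]$. Every one of them, however, still has true remaining work at least $\eps p_j$.

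\textbf{Step 2: the charging argument.} Order SLF's alive jobs by lower bound and OPT's alive jobs by remaining time. Map each OPT job to at most $\factor$ SLF jobs. The $\factor$ comes from the following fact: an unrevealed job of lower bound $\ell$ has $p_j\ge \ell$, and SLF has already invested at most $(1-\eps)p_j$ in it. Hence any group of SLF jobs all started before a given OPT job and sharing one "lower-bound level" contains at most $\factor$ jobs per unit of OPT's unfinished jobs. More precisely, summing remaining work, each alive SLF job $j$ has remaining work at least $\eps\cdot\ell_j(t)$. Combining this with the prefix volume inequality above through a standard layered (Abel summation) argument over thresholds $x$ gives $|A^{\SLF}_{\le x}(t)|\le \factor\,|A^{\opt}_{\le x}(t)|$ for all $x$, hence for $x=\infty$. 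Integrating over $t$ proves Theorem~\ref{thm:main}.

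\textbf{Main obstacle.} The hardest step is Step 1, the prefix volume comparison. The difficulty is that SLF sorts jobs by \emph{lower bounds} and not by true sizes, so the clean SRPT exchange argument fails. A job with a small lower bound can turn out to be huge and block others until it is revealed. I would handle this with induction over events (arrivals, revelations, completions), maintaining the invariant on the prefix sets. At a revelation the lower bound jumps from $(1-\eps)p_j/(1-\eps)$ upward, and the job leaves lower prefixes. This only decreases SLF's prefix volumes, so it is the favorable direction. The delicate part is to show that in SLF's prefix, the excess over OPT consists only of the $\eps$-tails of partially processed jobs. Their count is then controlled by dividing the excess volume by the per-job tail $\eps\ell_j$. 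I expect this accounting to determine exactly where the constant $\factor$ (rather than $1/\eps+1$) appears.
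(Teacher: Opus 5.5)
You have a genuine gap in the definition of the algorithm. The rule you analyze is not the paper's SLF, and it is not $\factor$-competitive. You prioritize by a lower bound on the \emph{total} size $p_j$ ($e_j/(1-\eps)$ while unknown, $p_j$ once revealed). So revealed jobs are ordered by original size, as in preemptive SPT. The paper's SLF prioritizes by a lower bound on the \emph{remaining} time: $\eta_j(t)=\frac{\eps}{1-\eps}e_j(t)$ for unknown jobs and $\eta_j(t)=r_j(t)$ for known jobs. Among unknown jobs both rules order by elapsed time, but they disagree whenever a known job is involved in the comparison.

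For $\eps=1$ your rule is exactly preemptive SPT, whereas the theorem claims $1$-competitiveness. Take a job of size $10$ released at time $0$ and a job of size $5$ released at time $9$. Your schedule has total flow time $20$ against $16$ for SRPT, and on $(10,14)$ it has two alive jobs against one, so your Step~1 inequality is false. The failure is not only at $\eps=1$. Release $A_1,\dots,A_n$ with slightly decreasing sizes, each at the moment its predecessor has remaining time $\xi$ in your schedule, and then a long stream of tiny jobs. This leaves $n$ almost-finished revealed jobs waiting in your schedule while SRPT holds at most two jobs, so the ratio is unbounded. No accounting can repair this; the rule must compare remaining times.

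Even with the correct rule, the sketch does not contain a working argument. The claim in Step~2 that every alive job has remaining work at least $\eps\,\ell_j(t)$ is false for revealed jobs. They satisfy $r_j\le\eps p_j=\eps\,\ell_j$ and can be arbitrarily close to completion; these are exactly the jobs in the example above. Under your own definition there is no jump at a revelation, since $e_j/(1-\eps)=p_j$ at that instant. The per-threshold invariant of Step~1 and the Abel-summation step are never made precise.

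The paper avoids per-threshold comparisons altogether:
\begin{enumerate}
\item It fixes the target time $t$, sets $\delta_{t'}=|\opt(t)\cap\opt(t')|$, and lets $B(t')$ be the $\factor\,\delta_{t'}$ SLF jobs with largest remaining time.
\item It proves $\vol_{B(t')}(t')\ge\vol^*_{\opt(t,t')}(t')$ by induction over $t'\le t$. At $t'=t$ this forces all of SLF's volume into at most $\factor\,|\opt(t)|$ jobs.
\item Phases in which SLF works only on known jobs are handled by Suffix Carving.
\item Round-robin phases are handled by the Fast Forward Lemma, which ends the phase at a time $t_1$ where the leader, with elapsed time $\gamma$, is being processed.
\end{enumerate}
At that time $t_1$:
\begin{itemize}
\item every alive SLF job has $r_j(t_1)\ge\frac{\eps}{1-\eps}\gamma$;
\item every new job completed in the phase has $p_j\le\frac{\gamma}{1-\eps}$;
\item every new job still alive has elapsed time exactly $\gamma$;
\item each job of $B(t_0)$ either kept its volume (being frozen or an untouched known job) or finished with $r_j(t_0)\le\frac{\eps}{1-\eps}\gamma$.
\end{itemize}
Each new job of $\opt(t,t_1)$ is then paid for by at most $\factor$ SLF jobs, and each finished job of $B(t_0)$ by one. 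This uses $\frac{\eps}{1-\eps}\factor\ge\frac{1}{1-\eps}$ and $\frac{\eps}{1-\eps}(\factor-1)\ge 1$, which is where $\factor$ comes from.
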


\subsection{Technical Overview}

We prove that SLF is locally $\factor$-competitive at any time $t$,
meaning that there are at most $\factor$ times as many jobs active in the algorithm's schedule
as in an optimal schedule. This is a necessary property for constant competitive algorithms
for this problem, and we also use it in \cite{GuptaKLSY25}.
However, there are two major places in which we improve our analysis in \cite{GuptaKLSY25}, 
which we explain in the following two paragraphs.

\paragraph{Frozen Jobs instead of Early Arrivals.}
A key ingredient in the analysis of \cite{GuptaKLSY25} is the Early Arrival Lemma, 
which shows that we can make certain jobs arrive earlier
without affecting the set of active jobs at time $t$ or their elapsed times 
in the algorithm's schedule. 
Clearly, such a modification can only help an optimal 
solution to reduce its number of active jobs at some target time $t$, hence never improves the local competitive ratio.
In \cite{GuptaKLSY25}, we iteratively search for what we called \emph{magical} times $s$ with the property that during $[s,t]$ the algorithm does not work on jobs~$j \in J_s$ whose remaining time
is more than $\eps p_j$ at time~$s$ ($p_j$ is the total length of job $j$).  
Then, we move the arrival time of jobs  
that arrive after time~$s$ to arrive earlier at time~$s$. 
This modification can only help an optimal solution to reduce its load at time $t$ and adds structure to the schedule produced by
our algorithm from time $s$ onwards, which greatly helps in the analysis. 

In the analysis presented here, we completely avoid this technical reduction. Instead, we mark such jobs $J_s$ as \emph{frozen}
as soon as the algorithm no longer works on them, before time $t$. 
This is possible due to a more powerful Fast Forward Lemma (cf.\ \Cref{lem:fastforward}), 
which does not require all jobs to arrive at the beginning of the fast-forward interval, but instead allows jobs to arrive 
over time until the next job becomes frozen or all jobs are known.

\paragraph{Fixed-Length Prefix Bounds instead of Valid Assignments.}
The second major change is that we use a weaker invariant over time.
In \cite{GuptaKLSY25}, we inductively show for any time $t$ a fractional assignment
between the active jobs
in the algorithm's schedule and the active jobs in the optimal schedule.
The value of a node corresponding to job $j$, in this fractional matching,
is its remaining processing time at time $t$ in the algorithm's schedule or the optimal schedule, respectively (we call this the job's volume at time $t$).
Specifically, we show
 that at any time there exists an assignment that is at most $\factor$-expanding.\footnote{This means that any prefix of length $x$ of SLF's queue sorted in nonincreasing order of remaining time is assigned to  at most   $\factor \cdot x$ jobs of $\OPT$.}
 This
allows us to bound the number of active jobs in the algorithm's schedule
with $\factor$ times the number of active jobs in the optimal schedule, thus proving \Cref{thm:main}.
This invariant has also been used in Schrage's original proof of SRPT's optimality for minimizing total flow time~\cite{Schrage68}.

We use a simpler invariant instead.
As in \cite{GuptaKLSY25} we fix
 a time $t$ for which we want to prove local competitiveness, 
and then only show an invariant (that depends on $t$) over all times $t' \leq t$. But
instead of requiring a bounded expansion for all prefixes of active jobs in the optimal solution, ordered by
 decreasing volume as in \cite{GuptaKLSY25},
we only require that the total volume of the $\factor \cdot |\opt(t',t)|$ jobs with largest volume 
in the algorithm's queue is at least as much as the total 
volume of the jobs in $\opt(t',t)$ in OPT's queue.
Here, $\opt(t',t)$ denotes the set of jobs that are active in the optimal schedule
at time $t'$ and time $t$.
This significantly simplifies the proof, as we do not need to maintain a full fractional assignment of bounded expansion 
over time, but only need to argue about the volume of a \emph{single set of jobs}.
This technique has previously been used in the context of 
online flow time scheduling \cite{BansalD07,AzarT18,AzarLT21}.
A major challenge is that our SLF algorithm may process multiple 
jobs at the same time (using Round-Robin on a certain subset of jobs) and hence reduce the volume of multiple jobs at the same time.
This is a crucial difference compared to those earlier works, in which an algorithm works at any time on at most one job, and thus, makes inductive arguments about volume changes much 
easier to handle.

\section{The Shortest-Lower-Bound-First Algorithm}

We first restate the Shortest Lower-Bound First (\ALG) algorithm, which was introduced in \cite{GuptaKLSY25}.
For any time $t$, let $\ALG(t)$ denote the set of active (released and unfinished) jobs in the schedule of $\ALG$.
We further distinguish between known and unknown jobs.

\begin{defn} [Known/Unknown Jobs] 
Let $j$ be an active job in $\ALG$ at time $t$. We say that a job $j$ is \textit{known} if $r_{j}(t) \leq \oneminusA \cdot p_j$. Otherwise, $j$ is \textit{unknown}. 
\end{defn}

Let $U(t)$ and $K(t)$ denote the unknown and known jobs in the algorithm's schedule at time $t$.  
At any time $t$ and for each job $j \in \ALG(t)$, we define an \emph{estimate} $\eta_{j}(t)$
as
\[
\eta_j(t) =
\begin{cases} 
\threshold \cdot e_{j}(t), & \text{ if } j \in U(t), \\ 
r_{j}(t), & \text{ else, }
\end{cases}
\]
and \ALG works on a job $j$ with smallest estimate $\eta_j(t)$.

More specifically, \SLF behaves as follows:
\begin{enumerate}[nosep]
    \item If $\min_{j \in K(t)} \eta_j(t) \le \min_{j \in U(t)} \eta_j(t)$, \ALG processes an arbitrary known job $j$ with minimum $\eta_j(t)$.
    \item Otherwise, \ALG processes \emph{all} unknown jobs $j$ with minimum $\eta_j(t)$ in parallel at an equal pace. We can achieve this by scheduling them in Round-Robin with infinitesimally small steps.
\end{enumerate}

We assume without loss of generality that the value of $\eps$ is known to the algorithm from the start of the instance. Otherwise, the algorithm can compute it from the elapsed times once it is notified of the remaining processing time for the first job that becomes known. Before the first job becomes known, knowing the value of $\eps$ is not necessary.

\section{Analysis}

In the following, we assume that $0 < \eps < 1$. Note that if $\eps=1$, SLF is equivalent to Shortest Remaining Processing Time (SRPT), which is known to be $1$-competitive for total flow time~\cite{Schrage68}.

Our goal is to show that \ALG is locally competitive. Let $\opt(t)$ denote the set of active jobs in an optimal solution at time $t$. 

\begin{lemma}\label{lem:local-competitiveness}
	At any time $t$, it holds that $|\alg(t)| \leq \factor \cdot |\opt(t)|$.
\end{lemma}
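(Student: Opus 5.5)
We fix the target time $t$ and prove, by induction over times $t' \le t$ (processed event by event: job arrivals, jobs turning known, completions in either schedule, and changes of the set of jobs SLF is processing), an invariant that depends on $t$. Let $k := \factor$, and write $\opt(t',t)$ for the jobs active in \opt at both $t'$ and $t$. Let $V_{\alg}(t')$ be the total remaining volume $\sum r_j(t')$ of the $k\cdot|\opt(t',t)|$ jobs of $\alg(t')$ with largest remaining time. Let $V_{\opt}(t')$ be the total remaining volume in \opt's schedule of the jobs in $\opt(t',t)$.

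\textbf{Invariant.} For every $t'\le t$, either
\[
V_{\alg}(t') \ge V_{\opt}(t'),
\]
or fewer than $k|\opt(t',t)|$ jobs are active in \alg at all, so the set $\alg(t')$ itself is the prefix.

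At $t'=t$ the invariant gives what we want. Suppose $|\alg(t)|>k|\opt(t)|$. Then the prefix consists of the $k|\opt(t)|$ largest jobs of \alg, and by the invariant their volume is at least the \opt-volume of $\opt(t)$. The remaining jobs of $\alg(t)$ carry positive extra volume on top of this. The plan is to exploit the fact that SLF has processed at least as much total work as \opt, since both schedules are work-conserving with the same arrivals. This would contradict the invariant at the moment these surplus jobs are charged. Concretely, I would strengthen the invariant so that it compares the volume of \emph{all} of $\alg(t')$ minus the non-prefix jobs with the total \opt volume.

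\textbf{Induction steps.}
\begin{itemize}
\item \emph{Arrivals.} A new job either enters $\opt(t',t)$, adding $p_j$ to $V_{\opt}$ and $k$ slots to the prefix, or it does not, so only $V_{\alg}$ can grow.
\item \emph{Completions in \opt.} These only leave the set $\opt(t',t)$ unchanged, because a job that finishes before $t$ is not in it.
\item \emph{\opt processing.} This decreases $V_{\opt}$ at rate at most $1$, and only when it works on a job of $\opt(t',t)$.
\item \emph{SLF processing.} This decreases $V_{\alg}$ at rate at most $1$, but only when SLF works on prefix jobs.
\end{itemize}

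The danger is the case where SLF works inside the prefix while \opt works outside $\opt(t',t)$.

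\textbf{Main obstacle.} The hard case is when SLF shrinks prefix jobs, including via Round-Robin on several unknown jobs with equal estimate. Here I would use the Fast Forward Lemma (\Cref{lem:fastforward}) together with \emph{frozen} jobs. When SLF works on a job $j$, every job $i$ in the prefix has estimate at least $\eta_j$. For unknown jobs this gives $r_i(t') \ge p_i - e_i = e_i\cdot\frac{1-\eps}{\eps}\cdot\frac{\eps}{1-\eps}\cdots$, which relates their elapsed work to $\eta$.

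The argument then runs as follows:
\begin{enumerate}
\item Jobs that SLF stops touching before $t$ are frozen.
\item On the interval until the next freeze, fast-forward the schedule to SRPT-like behaviour on the estimates.
\item Show that the work spent on the prefix jobs during this interval is at most $k$ times the work \opt must spend finishing jobs outside $\opt(t',t)$.
\end{enumerate}
The factor $k=\lceil 1/\eps\rceil$ enters because each unknown job processed to elapsed time $e$ certifies a job of \opt of size at least $e$. Also, at most $1/\eps$ such jobs fit within the remaining work of one \opt job, since a job of length $p$ becomes known after $(1-\eps)p$ and finishes within an additional $\eps p$.

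I expect this charging step to be the crux, namely simultaneous volume decrease across multiple prefix jobs under Round-Robin. I would first try to handle it by treating the Round-Robin group as a single merged job and applying the Fast Forward Lemma between consecutive freeze events.
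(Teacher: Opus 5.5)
\textbf{Top level.} Your invariant is exactly \Cref{lem:volume-bound}, and the paper proves the statement by applying it at $t'=t$, so your overall plan matches. The closing step, however, needs no strengthening and should not be left as a plan. Since $\opt(t,t)=\opt(t)$, the quantity $V_{\opt}(t)$ is \emph{all} of \opt's remaining volume at time $t$. This equals \alg's total remaining volume, because both schedules are work-conserving on the same instance (equality, not just ``at least''). Hence the $\lceil 1/\eps\rceil\cdot|\opt(t)|$ largest jobs of $\alg(t)$ already carry all of \alg's volume. Since every active job has positive remaining time, no active job lies outside them, and the lemma follows once the invariant holds.

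\textbf{The gap: the invariant is never proved.} Your event list correctly isolates the dangerous case, namely SLF shrinking the prefix while \opt works on jobs that finish before $t$. Beyond that it offers only slogans. Your inequality for unknown jobs is garbled; the usable fact is $r_i>\frac{\eps}{1-\eps}e_i$, i.e.\ $r_i\ge\eta_i$ for every active job $i$. Two ideas are missing, and you should write $k=\lceil 1/\eps\rceil$ throughout.
\begin{itemize}
\item \emph{Known jobs.} When SLF works on a known job, that job has minimum remaining time in all of \alg. So the top-$k|\opt(t',t)|$ volume can only drop if \alg has at most $k|\opt(t',t)|$ jobs, and then volume equality closes the case. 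This is \Cref{lem:suffix:carving}.
\item \emph{Round-Robin on unknown jobs.} The paper does not track the invariant instant by instant. It introduces frozen jobs and the leader, and shows that the non-frozen unknown jobs all freeze simultaneously while the leader is touched (\Cref{lem:freeze-leaders}). It then cuts $[0,t]$ into maximal intervals that either contain only work on known jobs, or end at a moment when the leader is touched. The invariant is re-established only at these endpoints.
\end{itemize}

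\textbf{How the jump works.} \Cref{lem:fastforward} compares $t_0$ and $t_1$ directly, using the leader's elapsed time $\gamma$ at $t_1$ as the yardstick (\Cref{obs:switching}):
\begin{itemize}
\item surviving known jobs were untouched;
\item known jobs completed by $t_1$ had $r_j(t_0)\le\frac{\eps}{1-\eps}\gamma$;
\item new jobs alive at $t_1$ have elapsed time exactly $\gamma$;
\item completed new jobs have $p_j\le\frac{\gamma}{1-\eps}$;
\item every job alive at $t_1$ has $r_j(t_1)\ge\frac{\eps}{1-\eps}\gamma$.
\end{itemize}
The $k$ slots gained per job of $\opt(t,t_1)\setminus\opt(t,t_0)$, and the slots freed by completed prefix jobs, are filled with such jobs. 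This covers all losses because $\frac{\eps}{1-\eps}k\ge\frac{1}{1-\eps}$ and $\frac{\eps}{1-\eps}(k-1)\ge 1$. That arithmetic is where $\lceil 1/\eps\rceil$ actually enters, not a count of how many jobs fit into one \opt job.

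\textbf{Why your proposed charging does not fit.} Your step~3 (prefix work at most $k$ times \opt's work outside $\opt(t',t)$) has the wrong shape. The invariant is an additive volume comparison in which $k$ appears only through the prefix cardinality, so a multiplicative work bound does not imply it. Merging a Round-Robin group into one job changes exactly the cardinality the prefix depends on. Finally, ``fast-forwarding'' in the paper is a jump in time, not a modification of the schedule toward SRPT.
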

Integrating this inequality over time yields \Cref{thm:main}.

Fix a point in time $t$ for which we want to prove \Cref{lem:local-competitiveness}. We can assume without loss of generality that until time $t$ the algorithm does not idle. Otherwise, since both the algorithm and the optimal solution do not idle unnecessarily, we can analyze every maximal time interval where the algorithm does not idle independently.

\paragraph{Terminology.}
For all times $t' \leq t$, let $\opt(t,t') := \opt(t) \cap \opt(t')$ refer to the set of jobs that are alive at time $t$ \emph{and} time $t'$ in the optimal solution. Let $\delta_{t'} := \delta(t,t') := |\opt(t,t')|$ 
denote the number of jobs that are alive at both time~$t'$ and time~$t$, in $\opt$.
Finally, we use $B(t',x)$ to refer to the $x$ alive jobs with the largest remaining volume in $\alg$ at time $t'$. 
Usually, we care only about the set $B(t',x)$ for $x = \factor \delta_{t'}$. 
Hence, we introduce the shorthand notation $B(t') := B(t',\factor \delta_{t'})$. 
For a set of jobs $Q$ and a time $t'$, let $\vol_Q(t') = \sum_{j \in Q} r_j(t')$
and $\vol^*_Q(t') = \sum_{j \in Q} r^*_j(t')$
denote the remaining volumes of jobs in $Q$ at time $t'$ in $\alg$ and $\opt$, respectively.

The algorithm's state can change instantaneously due to job arrivals and status changes (e.g., from unknown to known). We use $A(t^-)$ to denote the state of $A(t)$ just before the changes are applied at time $t$.  Intuitively, one can think of $t^{-}$ as the time instant just before $t$. The notation $t^-$ is also defined the same way for all other functions $f(t)$ and $f(t^-)$.

Our main goal is to show the following lemma. 

\begin{lemma}
    \label{lem:volume-bound}
    For all times $t' \le t$, we have
    $\vol_{B(t')}(t') \ge \vol^*_{\opt(t,t')}(t')$.
\end{lemma}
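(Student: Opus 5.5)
We prove \Cref{lem:volume-bound} by induction forward in time over $t' \in [0,t]$, treating separately the discrete events (arrivals, unknown-to-known transitions, completions) and the continuous evolution between consecutive events. At time $0$, or at the start of the busy period fixed above, nothing is active, so both sides vanish.

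\emph{Discrete events.} When a job $j$ arrives at $t'$, $\vol^*_{\opt(t,t')}$ increases by $p_j$ only if $j \in \opt(t)$. In that case $\delta_{t'}$ also increases by one, so $B(t')$ may take $\factor$ additional jobs. Since $j$ itself is available to $\ALG$ with volume $p_j$, the sum of the $\factor\delta_{t'}$ largest volumes increases by at least $p_j$. Jobs finishing in \OPT only decrease $\delta$ and the right-hand side; each such completion removes a job whose \OPT-volume is $0$, so the right-hand side is unchanged while $B$ shrinks by $\factor$ jobs. This step needs care. We would handle it by noting that \OPT's completions of jobs in $\opt(t)$ cannot happen before $t$, because $\opt(t,t')$ only contains jobs alive at $t$. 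Hence $\delta_{t'}$ is nondecreasing in $t'$, and this case never occurs. Completions in $\ALG$ and known/unknown transitions do not change remaining volumes, so these events are harmless.

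\emph{Continuous evolution.} During an interval without events, \OPT works at rate $1$, and the right-hand side decreases at rate $1$ whenever \OPT processes a job of $\opt(t,t')$, and at rate $0$ otherwise. $\ALG$ works at rate $1$ in total, since we assumed no idling. If $\ALG$ processes a job outside $B(t')$, or splits its Round-Robin among jobs partly outside $B(t')$, the left-hand side decreases at rate at most $1$, and the inequality is preserved. Reordering of $B$ is harmless because the top-$x$ sum is a continuous function of the volumes. The only danger is when the left-hand side is tight and could drop faster than the right-hand side. Since total processing rate is $1$, the left-hand side can fall at rate at most $1$. So it suffices to show: whenever \OPT is not working on $\opt(t,t')$ but $\ALG$ works inside $B(t')$, the inequality is strict or otherwise stable. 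This is false in general. \OPT may work on jobs outside $\opt(t)$ while $\ALG$ shrinks its largest jobs.

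\emph{The main obstacle.} The real content is therefore the following. When $\ALG$ works on jobs in $B(t')$, meaning on the largest-volume jobs, it is running SLF on large jobs. This can only happen for unknown jobs with small elapsed time, or when $\ALG$ has few jobs. The plan is to use the structure of SLF, via the frozen jobs and the Fast Forward Lemma (\Cref{lem:fastforward}) announced in the overview, to show the following: whenever $\ALG$ processes a job in $B(t')$, every active job of $\ALG$ has estimate at least that job's estimate. This gives a lower bound on all volumes in $\ALG(t')$ of the form $r_j \ge \eps\, e_j/(1-\eps)$ for unknown jobs. From this we would conclude that the $\factor\delta_{t'}$ largest jobs carry enough volume slack to absorb the deficit, because a $1/\factor$ fraction of the processing must reach jobs outside $B$. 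Concretely, the first approach would be an amortization over maximal intervals in which $\ALG$ works inside $B$. The Fast Forward Lemma would compare $\ALG$'s progress in such an interval with that of a schedule that has all jobs present from the interval's start. The $\factor$ factor then enters through the ratio $\eps/(1-\eps)$ between elapsed time and guaranteed remaining volume.
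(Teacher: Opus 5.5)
\textbf{There is a genuine gap.} Your handling of arrivals, of the monotonicity of $\delta_{t'}$, and of processing outside $B(t')$ is fine, but that is the easy part. You concede that instant-by-instant maintenance fails exactly when \ALG shrinks jobs of $B(t')$ while \opt works outside $\opt(t)$, and your plan for this case does not hold up.

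\begin{itemize}
\item The claim that ``a $1/\factor$ fraction of the processing must reach jobs outside $B$'' is false. A newly arrived job has estimate $0$, so \ALG immediately works only on newly arrived jobs, and these may be precisely the largest ones.
\item That claim also contradicts your own amortization over ``maximal intervals in which \ALG works inside $B$'': by definition, nothing is processed outside $B$ in such intervals.
\item The Fast Forward Lemma here does not compare with a schedule in which all jobs are present at the interval's start. That is the early-arrival reduction this paper removes.
\item A minor point: if your Round-Robin set is only partly outside $B(t')$, the inequality is not preserved just because the total rate is at most $1$.
\end{itemize}

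\textbf{The missing idea.} The paper never maintains the inequality instant by instant. It checks it only at phase boundaries and at $t$, which is all that \Cref{lem:local-competitiveness} uses. The phases are defined through frozen jobs, so that \ALG's state at each boundary is rigid.

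\begin{itemize}
\item If all unknown jobs are frozen, \ALG runs only known jobs. Then nothing arrives and $\opt(t,\cdot)$ is fixed. Moreover, touching a job of $B(t_0)$ means it is the smallest active job, which forces $|\ALG(t_1)|\le\factor\delta_{t_1}$ (\Cref{lem:suffix:carving}).
\item Otherwise the phase lasts until the non-frozen unknown jobs either all freeze at once (\Cref{lem:freeze-leaders}) or become known. In both cases the leader is touched at the phase end $t_1$.
\end{itemize}

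Let $\gamma$ be the leader's elapsed time at $t_1$, let $D$ be the jobs of $B(t_0)$ that \ALG completes, and let $O^+=\opt(t,t_1)\setminus\opt(t,t_0)$. Then \Cref{obs:switching} gives:
\begin{itemize}
\item surviving jobs of $B(t_0)$ were never touched;
\item jobs in $D$ had $r_j(t_0)\le\threshold\gamma$;
\item jobs of $O^+$ still alive in \ALG lost exactly $\gamma$, and those completed had $p_j\le\gamma/(1-\eps)$;
\item every alive job satisfies $r_j(t_1)\ge\threshold\gamma$.
\end{itemize}
Now take $(B(t_0)\setminus D)\cup(O^+\cap\ALG(t_1))$ together with $|D|+\factor|O^+\setminus\ALG(t_1)|+(\factor-1)|O^+\cap\ALG(t_1)|$ further alive jobs. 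This set has at most $\factor\delta_{t_1}$ elements, and it has enough volume because $\threshold(\factor-1)\ge 1$ and $\threshold\factor\ge\frac{1}{1-\eps}$.

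So the compensation comes from the $\factor$ extra slots per new job of $\opt(t)$, each filled with a job of volume at least $\threshold\gamma$. It does not come from processing outside $B$. This phase structure and accounting is the content your outline is missing.
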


Let $\vol(t) := \vol_{\ALG(t)}(t)$ be the total remaining volume in $\alg$ at time $t$. Note that $\vol(t)$ is also equal to the total remaining volume in $\opt$ at time $t$, as neither strategy idles unnecessarily.
We can now prove~\Cref{lem:local-competitiveness}.

\begin{proof}[Proof of \Cref{lem:local-competitiveness}]
Using \Cref{lem:volume-bound} for $t' = t$, we obtain
\[
    \vol(t) \geq \vol_{B(t)}(t) \ge \vol^*_{\opt(t)}(t) = \vol(t) \ .
\]
This implies $\vol(t) = \vol_{B(t)}(t)$, meaning that the $\factor \delta_t$ jobs with the largest remaining volume in $\alg$ contain all remaining volume in $\alg$ at time $t$. Thus, there are at most $\factor \delta_t = \factor \cdot |\opt(t)|$ active jobs in $\alg$ at time $t$, which concludes the proof.
\end{proof}

It remains to prove \Cref{lem:volume-bound}, which we  do by induction over time. More specifically, we start at time $0$ and work
our way forward to time $t$. We have two different types of induction
steps to go from time $t_0$ to time $t_1 \leq t$: the Fast Forward Lemma (\Cref{lem:fastforward}) and the Suffix Carving Lemma (\Cref{lem:suffix:carving}).
We distinguish between these two cases based on the following criteria:
\begin{enumerate}
	\item If the algorithm works on unknown jobs 
	or on known jobs while there are active  unknown jobs that will
	be processed again before time $t_1$, then we use the Fast Forward Lemma (\Cref{lem:fastforward}) to go from $t_0$ to $t_1$.
	\item Otherwise, the algorithm only works on known jobs during $[t_0,t_1]$. We use the Suffix Carving Lemma (\Cref{lem:suffix:carving}) to go from $t_0$ to $t_1$. 
\end{enumerate}
Importantly, we choose \emph{maximal} intervals $[t_0,t_1]$, that is, we extend $t_1$ as far as possible while satisfying one of the two criteria above. 
To properly define those intervals, we introduce the following definitions.

\begin{defn}[Frozen jobs]
    We say a job $j$ is \emph{frozen} at time $t'$ if $j$ is unknown at $t'$ and not touched by the algorithm during $[t',t]$. We use $F(t')$ to refer to the set of frozen jobs at time $t'$. 
\end{defn}

By definition, we have $F(t_0) \subseteq F(t_1)$ for all $0 \leq t_0 \leq t_1 \leq t$.

\begin{defn}[Leader]
    Let $s \le t$ be a point in time with $|U(s)\setminus F(s)| \ge 1$. Then, define the \emph{leader} $L(s)$ at time $s$ as some job $L(s) \in U(s) \setminus F(s)$ with $e_{L(s)}(s) = \max_{j \in U(s)\setminus F(s)} e_j(s)$.
\end{defn}

The following lemma proves a useful property of frozen jobs: If some job becomes frozen at time $t'$, then all unknown jobs $j$ that were released before time $t'$ (and still active at time $t'$) become frozen at time $t'$.

\begin{lemma}\label{lem:freeze-leaders}
	If at time $t'$ a set of jobs $J_f \subseteq U(t'^-) \setminus F(t'^-)$ becomes frozen, that is, $F(t'^-) \cup J_f = F(t')$, then
    $J_f = U(t'^-) \setminus F(t'^-)$.
    In particular, the leader is touched at time $t'^-$.
\end{lemma}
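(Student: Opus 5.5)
Since \alg never idles before $t$, at time $t'^-$ it is processing some job(s). A job $j\in J_f$ is unknown and unfrozen at $t'^-$, so it is touched at some time in $[t'^-,t]$. Because it is frozen at $t'$, it is untouched on $[t',t]$. Hence it must be processed exactly at the instant $t'^-$, meaning just before $t'$. By the rule of \alg, the job(s) processed at $t'^-$ are then unknown and have minimum estimate. So rule~2 applies at $t'^-$: \alg runs in parallel all unknown jobs $j$ that minimize $\eta_j(t'^-)=\threshold\, e_j(t'^-)$, i.e.\ those of minimum elapsed time. Thus $J_f\subseteq M$, where $M$ is the set of unknown jobs of minimum elapsed time at $t'^-$, and every job of $M$ is touched at $t'^-$. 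It remains to show that every $j\in U(t'^-)\setminus F(t'^-)$ becomes frozen at $t'$.

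\textbf{Step 1.} Take any $j \in U(t'^-)\setminus F(t'^-)$ with $j\notin J_f$. Then $j$ is not frozen at $t'$. It is still unknown at $t'$: if it were being processed at $t'^-$ it would be in $M$, and we will see that this case is consistent; otherwise its status does not change. So $j$ is touched at some first time $s\in[t',t]$. Its elapsed time satisfies $e_j(s)=e_j(t'^-)\ge e_{\min}$, where $e_{\min}$ is the common elapsed time of the jobs of $M$ at $t'^-$.

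\textbf{Step 2.} Consider a job $f\in J_f$. It is never processed again, so it remains unknown and active through $[t',t]$ with elapsed time frozen at $e_{\min}+$(infinitesimal), i.e.\ essentially $e_{\min}$. At time $s$, \alg selects $j$, so $j$ has minimal estimate among all active jobs. Comparing with $f$ gives $\threshold\, e_j(s)\le \threshold\, e_f(s)$, so $e_j(s)\le e_f(s)$. Under rule~2 this tie-breaking forces $f$ to be processed together with $j$, since all minimum-estimate unknown jobs run in parallel. That contradicts $f$ being frozen. The main obstacle is the infinitesimal bookkeeping: $f$ gained a sliver of elapsed time at $t'^-$, so $e_f(s)$ is marginally above $e_{\min}$, and we need $e_j(s)\le e_f(s)$ to force $j$ into the parallel set. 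I would handle this by arguing with a short interval $[t'-\delta,t')$ on which \alg runs the set $M$, rather than with a literal instant. Then every job of $M$ has elapsed time exactly $e(t')$ at $t'$, and any untouched $j\notin M$ has $e_j>e_{\min}$. In fact the minimum $e(t')$ over unfrozen unknown jobs is attained by $M$. If $j\in M\setminus J_f$ is touched again, then all of $M$ (including $f$) would be touched with it, because they share equal minimal elapsed time. If $j\notin M$, then $e_j(s)\ge e_j(t'^-) > e_{\min}\cdot$ wait, this points the wrong way for a contradiction. So instead I would argue that for \alg to pick $j$ at $s$, the estimate of $j$ must be minimal. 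But $f$'s estimate stays strictly smaller and $f$ is still unknown and active, so $f$ would be picked instead, or alongside $j$ if they tie. This is the contradiction. Known jobs can be processed instead of $f$ but never $j$ instead of $f$, since both are unknown.

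\textbf{Conclusion.} Hence $U(t'^-)\setminus F(t'^-)\subseteq J_f$, and together with the hypothesis $J_f\subseteq U(t'^-)\setminus F(t'^-)$ we get equality. The leader $L(t'^-)$ lies in $U(t'^-)\setminus F(t'^-)=J_f\subseteq M$, so it is touched at $t'^-$. This also shows that all unfrozen unknown jobs had equal elapsed time at $t'^-$.
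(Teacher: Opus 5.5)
Your proposal is correct in substance and is essentially the paper's argument. For a non-frozen, still-unknown $j$ first touched at some $s\ge t'$, you get $e_j(s)=e_j(t'^-)\ge e_f(t'^-)=e_f(s)$ for every $f\in J_f$, so Rule~2 would process $f$ together with $j$; the paper states the same contradiction as $e_j(t'^-)<e_{j_f}(t'^-)$, which clashes with $j_f$ being touched at $t'^-$. Your ``wrong way'' aside is a false alarm: for $j\notin M$, $e_j(s)>e_{\min}=e_f(s)$ is exactly that contradiction. The infinitesimal worry also disappears, because elapsed time is continuous and so $e_f(t')=e_f(t'^-)$.

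The case you defer, a job of $M$ becoming known exactly at $t'$, is never discharged, but the paper's proof skips it too. Such a job is known at $t'$ and so cannot lie in $J_f$, so the equality only holds if this case is excluded. The leader conclusion is unaffected, since such a job is also touched at $t'^-$.
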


\begin{proof}
Since $J_f \cap F(t'^-) = \emptyset$ and by the definition of frozen jobs,
it must be that every job in $J_f$ is touched at time $t'^-$.
Let $j_f \in J_f$.

Assume that there exists a job $j \in (U(t'^-) \setminus F(t'^-)) \setminus J_f$. This means that there exists a time $t' \leq t'' \leq t$ at which $j$ is touched. Let $t''$ be the first such time. Hence $e_j(t'^-) = e_j(t'')$. By the definition of SLF and using the fact that $j_f \in F(t'')$, it must hold that
$e_j(t'') < e_{j_f}(t'')$. 
Since $j_f$ is frozen from time $t'$ on, we have $e_{j_f}(t'^-) = e_{j_f}(t'')$.
This implies $e_{j}(t'^-) < e_{j_f}(t'^-)$. But this contradicts that $j_f$ was touched at time $t'^-$.
Hence, $J_f = U(t'^-) \setminus F(t'^-)$.
Since every job in $J_f = U(t'^-) \setminus F(t'^-)$ is touched at time $t'^-$, the leader is touched at time~$t'^-$.
\end{proof}

We use this lemma in \Cref{sec:proof-volume-bound} to prove \Cref{lem:volume-bound} at the end of this section. Before that, in the following two subsections, we state and prove the Fast Forward Lemma and the Suffix Carving Lemma.

\subsection{Fast Forward Lemma}
\label{sec:ff}

We first state the Fast Forward Lemma. 

\begin{lemma}[Fast Forward Lemma]
\label{lem:fastforward}
Let $t_0 \le t$ be such that $\vol_{B(t_0)}(t_0) \ge \vol^*_{\opt(t,t_0)}(t_0)$ and $U(t_0) = F(t_0)$, i.e., all unknown jobs at $t_0$ are frozen. Let $t_1 \ge t_0$ be such that
\begin{enumerate}[noitemsep,nolistsep]
    \item at time $t_1$, the leader 
    $L(t_1) \in U(t_1)\setminus F(t_1)$ with 
    $e_{L(t_1)}(t_1) = \max_{j \in U(t_1) \setminus F(t_1)} e_j(t_1)$ 
    is touched,
    \item the set of frozen jobs does not change during $[t_0,t_1]$,
    that is, $F(t_0) = F(t')$ for all $t' \in [t_0,t_1]$, and
    \item during $(t_0,t_1]$, there is always at least one new unknown job, that is, $|U(s)\setminus U(t_0)| \ge 1$ for all $s \in (t_0,t_1]$. This implies that there is a new job arrival during $(t_0,t_1]$.
\end{enumerate}
Then, we have that  $\vol_{B(t_1)}(t_1) \ge \vol^*_{\opt(t,t_1)}(t_1)$.
\end{lemma}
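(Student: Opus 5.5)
Write $k := \factor$ and $W := U(t_1)\setminus F(t_1)$. By assumption~2, $W$ is the set of unknown jobs at $t_1$ that are not frozen. By assumption~3 and $U(t_0)=F(t_0)$, every job of $W$ arrived during $(t_0,t_1]$. The plan is to show that the leader $L:=L(t_1)$, together with the work done during $(t_0,t_1]$, forces enough volume into the $k\delta_{t_1}$ largest jobs of \ALG. The argument compares \ALG's schedule on $[t_0,t_1]$ with \OPT's, splitting the \OPT-jobs of $\opt(t,t_1)$ into those already present at $t_0$ and those released in $(t_0,t_1]$.

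\textbf{Step 1 (structure of \ALG on $(t_0,t_1]$).} Since frozen jobs are never touched after $t_0$, \ALG works during $(t_0,t_1]$ only on jobs in $K(\cdot)\cup W$. Jobs in $W$ are unknown and never frozen before $t_1$, and $L$ is touched at $t_1$, so $\eta_L(t_1)=\threshold e_L(t_1)$ is the current minimum estimate. Round-Robin equalizes elapsed times, so the jobs in $W$ all have elapsed time at most $e_L(t_1)$. Each of them has processing time more than $e_j/(1-\eps)$, hence remaining volume more than $\threshold e_j$.

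Next, every known job $j$ that is active at $t_1$ satisfies $r_j(t_1)\ge \threshold e_L(t_1)$, at least for those jobs that could still be processed before $L$. Any known job processed during $(t_0,t_1]$ was processed while some unknown job of $W$ had a larger or equal estimate, and the estimates of $W$ only grow.

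The key quantitative claim is the following. Let $X$ be the set of jobs released in $(t_0,t_1]$ that \ALG completed or worked on. The total work \ALG spent on $X$, plus the work on previously known jobs, equals $t_1-t_0$. Moreover, each job $j$ released in $(t_0,t_1]$ and finished by \ALG has $p_j\le \threshold e_L(t_1)+\dots$, i.e.\ it is small relative to $e_L$. Meanwhile, each job still in $W$ carries at least $\threshold e_L(t_1)$ of volume, and in fact close to $e_L/\eps$ total. The factor $k=\factor$ arises because $k$ unknown jobs, each with remaining volume at least $\threshold e_L$, dominate the elapsed work.

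\textbf{Step 2 (volume accounting).} I would write
\[\vol^*_{\opt(t,t_1)}(t_1)\le \vol^*_{\opt(t,t_0)}(t_0) - (\text{\OPT's work on }\opt(t,t_0)\text{ in }[t_0,t_1]) + \textstyle\sum_{j\in \opt(t,t_1),\, r_j\in(t_0,t_1]} r^*_j(t_1).\]
I would then bound each new term by \ALG quantities. If a new job $j\in\opt(t,t_1)$ is still unknown in \ALG, then $r_j(t_1)\ge p_j-e_L$. If \ALG has finished $j$ (or made it known), then $j$ is small, and \OPT keeping it alive at $t$ means $\delta_{t_1}$ counts it. This contributes one slot to $B(t_1)$, which gives $k$ new slots, enough to host up to $k$ jobs of $W$. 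The main term from the induction hypothesis is handled by noting that \ALG's work during $(t_0,t_1]$ was done on jobs whose remaining volume is at most the volumes of the jobs now in $W$. So removing that work from $B(t_0)$ can be compensated by the volume of newly arrived jobs in $W$ that enter $B(t_1)$. Since $\delta_{t_1}\ge \delta_{t_0}-(\#\text{OPT-completions of }\opt(t,t_0)\text{ jobs})+(\#\text{new})$, the slot counts match up with multiplicity $k$.

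\textbf{Main obstacle.} The hardest step is the case distinction for when $\delta$ decreases, i.e.\ when \OPT completes jobs of $\opt(t,t_0)$ during the interval. In that case $B(t_1)$ loses $k$ slots, and the jobs dropped from $B$ must have small volume. Otherwise, \ALG's simultaneous Round-Robin work could shrink many large jobs at once. Here I would first try to argue that jobs in $W$ all have elapsed time equal to $e_L$ up to ordering, so the Round-Robin reduces volume only in the ``tail'' of $B$. Then the $k$ smallest members of $B(t_0)$ have volume at most what \OPT spent finishing one job, with $k=\factor$ exactly compensating the ratio $\frac{1-\eps}{\eps}+1=1/\eps$ between an unknown job's remaining lower bound and its elapsed time. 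If this fails I would refine by charging per unit of \OPT's processing rather than per completion.
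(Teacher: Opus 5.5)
Your proposal does not yet amount to a proof, and the step you flag as the main obstacle cannot occur. Since $\opt(t,t')=\opt(t)\cap\opt(t')$, every job of $\opt(t,t_0)$ is alive in \OPT at $t_0$ and at $t\ge t_1$, hence also at $t_1$. So $\opt(t,t_0)\subseteq\opt(t,t_1)$, $\delta$ never decreases, and $\delta_{t_1}=\delta_{t_0}+|O^+|$ with $O^+:=\opt(t,t_1)\setminus\opt(t,t_0)$. \OPT's work on old jobs is then absorbed by $\vol^*_{\opt(t,t_0)}(t_1)\le\vol^*_{\opt(t,t_0)}(t_0)$.

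Your worry that Round-Robin might shrink many large jobs of $B(t_0)$ at once is also unfounded, but this is exactly what must be proved, and you only gesture at it. Round-Robin runs only on unknown non-frozen jobs, which were all released after $t_0$ (because $F$ is constant and $U(t_0)=F(t_0)$). Frozen jobs are untouched by definition. A known job touched at $s\in(t_0,t_1)$ has $r_j(s)\le\threshold e_{L(s)}(s)\le\threshold\gamma$, where $\gamma:=e_{L(t_1)}(t_1)$; so it must be finished by $t_1$, since otherwise \ALG would prefer it to $L(t_1)$ at $t_1$. The step $e_{L(s)}(s)\le\gamma$ needs the maximum elapsed time over unknown non-frozen jobs to be nondecreasing on $[t_0,t_1]$. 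This is where Condition~3 is really used: such a job always exists, so the maximum cannot drop when the currently leading group becomes known. Your phrase ``the estimates of $W$ only grow'' ignores that $W$ changes over time. The upshot is that every job of $B(t_0)$ is either untouched or completed, and the completed ones, $D:=B(t_0)\setminus\ALG(t_1)$, had $r_j(t_0)\le\threshold\gamma$.

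What is still missing is the explicit exchange. Besides the facts above, you need the following:
\begin{itemize}
\item Each $j\in O^+\cap\ALG(t_1)$ is unknown with $e_j(t_1)=\gamma$. This is an equality, not just $\le$, because $L(t_1)$ is touched. Hence $r_j(t_1)\ge r^*_j(t_1)-\gamma$.
\item Each $j\in O^+\setminus\ALG(t_1)$ has $p_j\le\gamma/(1-\eps)$, not $\threshold\gamma+\dots$.
\item Every job of $\ALG(t_1)$ has $r_j(t_1)\ge\threshold\gamma$.
\end{itemize}
Then start from $(B(t_0)\setminus D)\cup(O^+\cap\ALG(t_1))$ and add arbitrary other jobs of $\ALG(t_1)$: one per job of $D$, $\factor-1$ per job of $O^+\cap\ALG(t_1)$, and $\factor$ per job of $O^+\setminus\ALG(t_1)$. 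This uses at most $\factor\delta_{t_1}$ jobs. It covers all volume losses because $\threshold(\factor-1)\ge1$ and $\threshold\factor\ge\frac{1}{1-\eps}$. If $\ALG(t_1)$ runs out of jobs, the set is all of $\ALG(t_1)$, and the bound follows from the equal total volumes of \ALG and \OPT.

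Your Step~2 contains fragments of this, namely the $\factor$ slots per new job finished by \ALG and the $1/\eps$ ratio. However, it insists that the fillers come from $W$, omits the too-few-jobs case, and replaces the per-job bound on $D$ with the unproven claim that \ALG's work was done on jobs no larger than those in $W$.
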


For the rest of this subsection, we fix $t_0$ and $t_1$ as described in~\Cref{lem:fastforward}. Let $J_{new}$ denote the set of jobs that are released during $(t_0,t_1)$ and let $\gamma := e_{L(t_1)}(t_1)$ denote the elapsed time of the leader $L(t_1)$ at time~$t_1$. 
We start with the following immediate corollary of Conditions~2 and~3 of \Cref{lem:fastforward}.

\begin{observation}
	\label{obs:unknown-nonfrozen}
	At any time $s \in (t_0,t_1]$, there exists an unknown non-frozen job, that is, $|U(s) \setminus F(s)| \geq 1$.
\end{observation}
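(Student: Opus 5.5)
The statement is essentially a direct combination of Conditions~2 and~3 of \Cref{lem:fastforward}, so the plan is a short set-theoretic argument. Fix $s \in (t_0,t_1]$. By Condition~3 there is a job $j \in U(s) \setminus U(t_0)$. It then suffices to show that $j \notin F(s)$, since this gives $j \in U(s)\setminus F(s)$ and hence $|U(s)\setminus F(s)| \ge 1$.

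To show $j \notin F(s)$, I would use Condition~2, which states $F(s) = F(t_0)$ for every $s \in [t_0,t_1]$. By the hypothesis of \Cref{lem:fastforward}, $F(t_0) = U(t_0) \subseteq U(t_0)$, and in general $F(t_0)\subseteq U(t_0)$ holds simply by the definition of frozen jobs, which requires a frozen job to be unknown at that time. Therefore $F(s) = F(t_0) \subseteq U(t_0)$. Since $j \notin U(t_0)$, it follows that $j \notin F(s)$, which completes the argument.

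There is no real obstacle. The only point needing care is the endpoint: Condition~2 covers the closed interval $[t_0,t_1]$, so it applies at $s=t_1$ as well, while Condition~3 is only asserted on $(t_0,t_1]$, which matches exactly the range in the observation. A secondary subtlety is that "unknown" is evaluated at time $s$ itself, after any instantaneous changes. Because both conditions and the definition of $F(s)$ are stated at the same instant $s$, no $s^-$ issues arise.
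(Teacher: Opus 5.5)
Your proof is correct and is essentially the paper's argument: Condition~2 together with $F(t_0)=U(t_0)$ gives $F(s)=U(t_0)$, so the job supplied by Condition~3 lies in $U(s)\setminus F(s)$. Your remark that only $F(t_0)\subseteq U(t_0)$ is needed (which holds by the definition of frozen jobs) is a minor, cosmetic refinement.
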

\begin{proof}
	This simply follows because Condition~2 and the assumption 
	yield $F(s) = F(t_0) = U(t_0)$ for all $s \in [t_0,t_1]$.
	Then, Condition 3 yields $|U(s) \setminus F(s)| = |U(s) \setminus U(t_0)| \geq 1$ for all $s \in (t_0,t_1]$.
\end{proof}

Observe that Condition~2 of~\Cref{lem:fastforward} implies the following useful property, which states that the elapsed time of the leaders never decreases during the interval $[t_0,t_1]$. Note that the identity of the leader can change during $[t_0,t_1]$, but nevertheless, the elapsed time of the leader does not decrease.

\begin{observation}
    \label{obs:increasing:elapsed}
    The maximum elapsed time of unknown non-frozen jobs never decreases during $[t_0,t_1]$, i.e., for all $s,s' \in [t_0,t_1]$ with $s \le s'$ we have $\max_{j \in U(s') \setminus F(s')} e_j(s') \ge \max_{j \in U(s) \setminus F(s)} e_j(s)$. 
\end{observation}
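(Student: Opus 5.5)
We need to prove that the function $M(s) := \max_{j \in U(s)\setminus F(s)} e_j(s)$ is nondecreasing on $[t_0,t_1]$. By \Cref{obs:unknown-nonfrozen} the maximum is over a nonempty set for $s \in (t_0,t_1]$. At $s=t_0$ the set $U(t_0)\setminus F(t_0)$ is empty, so there we use the convention that the maximum of an empty set is $0$, or $-\infty$. Since elapsed times are nonnegative, this case is trivial. Elapsed times of individual jobs only increase over time. So $M$ can decrease only if the set $U(s)\setminus F(s)$ loses its maximizer. I will show this never happens, by checking each event that changes the set.

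The set $U(s)\setminus F(s)$ changes only through a few events. First, a job can be released; this only adds elements, which cannot decrease the maximum. Second, a job can become frozen. Condition~2 of \Cref{lem:fastforward} says $F(s)=F(t_0)$ throughout $[t_0,t_1]$, so no job becomes frozen inside the interval and this event cannot occur. Third, a job can leave $U$, either by becoming known or by completing. An unknown job cannot complete without first becoming known, so completion reduces to the second case.

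The main step is therefore the case where the leader, or any maximizer, becomes known at some time $s' \in (t_0,t_1]$. I would handle it as follows. Suppose the maximizer $j$ leaves $U$ at time $s'$. It must have been processed just before $s'$. By the rule of \SLF, whenever unknown jobs are processed, all unknown jobs with minimum estimate are processed together at equal pace. Hence $j$ had the minimum elapsed time among unknown jobs being processed. If some other unknown non-frozen job $j'$ had a strictly smaller elapsed time, then $j'$ would have a strictly smaller estimate, and \SLF would prefer $j'$ over $j$. So all unknown non-frozen jobs have elapsed time at least $e_j$ when $j$ is processed. The frozen jobs are never touched after $t_0$, so they do not interfere.

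If another non-frozen unknown job $j'$ remains after $j$ leaves, then $e_{j'}(s') \ge e_j(s'^-)$ and the maximum does not drop. Strictly speaking the maximum value becomes exactly $e_j$, since $j$ was both the minimum and the maximum. Could $j$ leave while no other non-frozen unknown job remains? \Cref{obs:unknown-nonfrozen} says the set stays nonempty at $s'$, so some job remains. That remaining job is either a survivor, which has elapsed time at least $e_j$ as argued, or a job newly released at exactly $s'$. The second possibility is the obstacle I expect: a newly released job has elapsed time $0$, so the maximum could drop. I would resolve it by using Condition~3 more carefully. It requires that there is always at least one unknown job outside $U(t_0)$, and every job in $U(s)$ outside $U(t_0)$ is released after $t_0$. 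I would argue that when a maximizer becomes known, it has been processed as the minimum-estimate job, so every coexisting non-frozen unknown job has at least its elapsed time. Only arrivals at the same instant could break this, and such arrivals are processed before the maximizer could continue. If this argument does not close the gap, I would fall back to the paper's convention $U(s'^-)$ versus $U(s')$ and compare pre-change states, where the survivor set is nonempty by Condition~3 applied just before $s'$.
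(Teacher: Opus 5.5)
Your core argument is the paper's. Elapsed times only grow and, by Condition~2, no job joins $F$, so the maximum can drop only when every maximizer becomes known. A maximizer that is being processed has minimum estimate among all unknown jobs, so all non-frozen unknown jobs share its elapsed time, and any survivor keeps the maximum at least $e_j(s'^-)$. Your handling of the empty set at $t_0$, and of freezing via Condition~2, is more careful than the paper's. The gap is exactly the case you flagged, and neither of your repairs closes it. That a newly released job is processed immediately is irrelevant, because the maximum has already dropped at the instant $s'$. Condition~3 at $s'^-$ only gives $U(s'^-)\setminus U(t_0)\neq\emptyset$, which $j$ itself witnesses (or equal-elapsed peers that become known with it), so it yields no survivor.

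This case cannot be closed, because the statement is false there. Take $\eps=\tfrac12$, job $a$ with $p_a=2$ released at $0$, job $b$ with $p_b=10$ released at $1$, $t=t_1=\tfrac32$ and $t_0=0^-$. Job $a$ becomes known exactly when $b$ arrives, and \SLF then runs $b$ on $[1,2)$. Hence $F\equiv\emptyset$ and all hypotheses of \Cref{lem:fastforward} hold. The proof of \Cref{lem:volume-bound} applies the lemma to exactly this interval, in its final ``no such $t_1$'' sub-case. Yet the maximum is $0.9$ at time $0.9$ and only $0.5$ at time $\tfrac32$. \Cref{obs:switching}(ii) fails too, since $a\in K(t_1)\setminus K(t_0)$.

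The paper's proof has the same hole: its concluding ``contradiction to \Cref{obs:unknown-nonfrozen}'' disappears if a job is released at $\bar s$. What is needed is a convention, not a further argument. At any instant, apply status changes before releases, or assume no release coincides with the last non-frozen unknown jobs becoming known. Then the intermediate state has $U=F$, and the induction closes the Fast Forward interval there via Case~2(b). On the resulting intervals your survivor argument, like the paper's, is complete.
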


\begin{proof}
    For the sake of contradiction, assume that there are points in time $s,s' \in [t_0,t_1]$ with $s \le s'$ but $\max_{j \in U(s') \setminus F(s')} e_j(s') < \max_{j \in U(s) \setminus F(s)} e_j(s)$. Then, there must be some time $s \le \bar{s} \le s'$ with  $\max_{j \in U(\bar{s}^+) \setminus F(\bar{s}^+)} e_j(\bar{s}^+) < \max_{j \in U(\bar{s}) \setminus F(\bar{s})} e_j(\bar{s})$. This can only happen if the leader $L(\bar{s})$ and all jobs in $U(\bar{s})\setminus F(\bar{s})$ with the same elapsed time as the leader become known or frozen at $\bar{s}$. In either case, the leader is touched at $\bar{s}$, which implies that all jobs in $U(\bar{s})\setminus F(\bar{s})$ have the same elapsed time as the leader. Hence, all jobs in $U(\bar{s})\setminus F(\bar{s})$ become known or frozen at time $\bar{s}$. This is a contradiction to \Cref{obs:unknown-nonfrozen}.
\end{proof}

To prove~\Cref{lem:fastforward}, we rely on the following observation. This corresponds to \cite[Lemma~V.13]{GuptaKLSY25}. Compared to \cite{GuptaKLSY25}, the argument slightly changes because we need to use the second assumption of~\Cref{lem:fastforward} instead of the early arriving property. 

\begin{lemma}[Switching]
    \label{obs:switching}
The following statements hold: 
\begin{enumerate}[noitemsep,nolistsep,label=(\roman*)]
        \item For every job $j \in K(t_0)$, $j \in \ALG(t_1)$ if and only if $r_{j}(t_0) >    \threshold\cdot \gamma$.
        \item If $ j\in K(t_1)$, then $r_{j}(t_0) = r_{j}(t_1).$
Thus, $K(t_1) \subseteq K(t_0)$ and $\ALG$ does not touch $j$ during $(t_0,t_1]$.
        \item For every job $j \in U(t_1) \setminus F(t_1)$, we have $e_{j}(t_1) = \gamma$. \item For every job $j \in J_{new} \setminus \ALG(t_1)$, we have $p_j = e_{j}(t_1) \leq \frac{1}{1-\varepsilon} \cdot  \gamma$.
    \end{enumerate}
\end{lemma}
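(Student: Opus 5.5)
The whole argument rests on one comparison. Throughout $[t_0,t_1]$ the estimate of the current leader, $\threshold\cdot\max_{j\in U(s)\setminus F(s)}e_j(s)$, is nondecreasing by \Cref{obs:increasing:elapsed}, and at $t_1$ it equals $\threshold\cdot\gamma$. I will show (iii) and (iv) first and then use them for (i) and (ii).

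For (iii), Condition~1 says the leader $L(t_1)$ is touched at $t_1$. So SLF is in its second mode and processes all unknown jobs of minimum estimate. Frozen jobs are never touched, so the processed set is exactly the non-frozen unknown jobs with minimum elapsed time. The leader has the maximum elapsed time among them, so all jobs in $U(t_1)\setminus F(t_1)$ must share elapsed time $\gamma$.

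Before the remaining parts I need an auxiliary claim: whenever a non-frozen unknown job $j$ is touched at time $s\in(t_0,t_1]$, we have $e_j(s)\le\gamma$. I would argue as follows. Such a $j$ is either still unknown and non-frozen at $t_1$, in which case its elapsed time is at most $e_j(t_1)=\gamma$ by (iii). Otherwise it became known or completed before $t_1$; it cannot become frozen, by Condition~2. In that case, at the moment it was touched, $j$ had minimum estimate among the non-frozen unknown jobs. Hence $e_j(s)$ is at most the leader's elapsed time at $s$, which is at most $\gamma$ by \Cref{obs:increasing:elapsed}.

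For (iv), take $j\in J_{new}$ that finished by $t_1$. It becomes known at elapsed time $(1-\eps)p_j$, and at that moment it was being processed as a non-frozen unknown job. The auxiliary claim then gives $(1-\eps)p_j\le\gamma$, so $p_j=e_j(t_1)\le\gamma/(1-\eps)$.

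For (i) and (ii), consider any known job $j$ that SLF works on at some time $s\in(t_0,t_1]$. By \Cref{obs:unknown-nonfrozen} a non-frozen unknown job exists at $s$, and SLF chose $j$, so $r_j(s)\le\threshold\cdot(\text{min elapsed among non-frozen unknown})\le\threshold\gamma$. The last inequality uses that the non-frozen unknown jobs at $s$ either persist to $t_1$ with elapsed time at most $\gamma$, or are touched later and the auxiliary claim applies. Thus any known job touched in $(t_0,t_1]$ has remaining time at most $\threshold\gamma$, and SRPT-style it is then completed before any unknown job is processed again. The justification is that its remaining time only decreases while the unknown jobs' minimum estimate does not drop below $r_j$.

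Conversely, consider a known job $j\in K(t_0)$ with $r_j(t_0)>\threshold\gamma$. It is never touched, since being touched would force $r_j\le\threshold\gamma$ at that moment, and before that touch its remaining time equals $r_j(t_0)$. So $j$ survives to $t_1$, which gives (i). The same reasoning gives (ii). A job of $K(t_1)$ that is new, or was unknown at $t_0$, would have become known inside the interval, but frozen jobs never become known and new jobs that become known have remaining time at most $\threshold\gamma$, so they are finished. For the jobs in $K(t_1)$ that were known at $t_0$, $r_j(t_1)=r_j(t_0)$.

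The main obstacle is the completion step: showing that a known job started during the interval is finished before $t_1$ rather than merely reduced. I would handle it by considering the last time before $t_1$ that SLF works on such a job $j$. After that time only unknown jobs are processed, whose estimates stay below $\threshold\gamma$. Then the estimate comparison at time $t_1$, where the leader is touched, contradicts $j$ still being alive with $r_j\le\threshold\gamma$ unless ties are broken suitably. The tie-break rule in case~1 of SLF, which prefers the known job, resolves this.
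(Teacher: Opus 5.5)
Your proof is correct and follows essentially the paper's route. You bound everything touched inside the interval via the current leader's elapsed time and its monotonicity (\Cref{obs:increasing:elapsed}), then derive the contradiction from the leader being touched at $t_1$: the tie-break in favor of known jobs forces every known job alive at $t_1$ to have $r_j(t_1) > \threshold\gamma$, which also gives the ``only if'' direction of (i) for never-touched jobs that you left implicit. Drop the ``SRPT-style'' remark, though. It is false, since a newly arriving job has estimate $0$ and preempts the known job before it completes, and your final comparison at $t_1$ already makes it unnecessary.
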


\begin{proof}
    We separately prove the four statements:
\begin{enumerate}[label=(\roman*)]
    \item Fix an arbitrary job $j \in K(t_0) \cap \ALG(t_1)$. Since $j$ is alive and known at $t_1$ but the leader $L(t_1)$ is touched at $t_1$, we must have $r_j(t_0) > \threshold \cdot e_{L(t_1)}(t_1) = \threshold \gamma$ by  the definition of $\ALG$. 

    Fix an arbitrary job $j \in K(t_0) \setminus \ALG(t_1)$. Let $s$ denote the earliest point in time during $(t_0,t_1)$ at which $j$ is touched. 
    Using~\Cref{obs:increasing:elapsed}, we get
    $r_j(t_0) = r_j(s) \le \threshold e_{L(s)}(s) \le \threshold e_{L(t_1)}(t_1) = \threshold \gamma$.
    
    \item We first argue that there cannot be a job $j \in J_{new} \cap K(t_1)$. If there was such a job $j$, then there is a point in time $t_0 < s < t_1$ at which $j$ becomes known. By 
\Cref{obs:unknown-nonfrozen},
    there needs to be at least another non-frozen unknown job $j' \in U(s) \setminus F(s)$. Consider the leader $L(s)$ among those jobs ($e_{L(s)}(s) = \max_{j \in U(s) \setminus F(s)} e_j(s)$). Note that we might have $L(s) \not= L(t_1)$.
    Since $j$ becomes known at $s$, we have $r_j(s) = \threshold e_j(s) \le \threshold e_{L(s)}(s)$. Using that the remaining size of $j$ can only decrease over time and that the elapsed time of the (potentially changing) leader only increases over time by~\Cref{obs:increasing:elapsed}, we get $r_j(t_1) \le \threshold \gamma$. If $r_j(t_1) > 0$, then this is a contradiction to $L(t_1)$ being touched at $t_1$. Hence, $j \not\in \ALG(t_1)$.

    The argument for jobs in $K(t_0)$ is essentially the same. If $r_j(t_0) > r_j(t_1)$, then $j$ is touched at some point in time $t_0 < s < t_1$. Hence, $r_j(s) \le \threshold e_{L(s)}(s)$, where the leader $L(s)$ exists by the proof of \Cref{lem:volume-bound}.
Using~\Cref{obs:increasing:elapsed}, we get $r_j(t_1) \le \threshold \gamma$, a contradiction to $L(t_1)$ being touched at $t_1$.

    \item Since the leader $L(t_1)$ with $e_{L(t_1)}(t_1) = \gamma$ is touched at $t_1$, all $j \in U(t_1) \setminus F(t_1)$ need to satisfy $e_j(t_1) = \gamma$ by the definition of $\ALG$.

    \item Consider a job in $j \in J_{new} \setminus \ALG(t_1)$ and let $s \in (t_0,t_1)$ denote the time at which $j$ becomes known. Since $j$ is touched at time $s$, we must have $e_j(s) \le e_{L(s)}(s)$. Using~\Cref{obs:increasing:elapsed}, this gives $e_j(s) \le e_{L(t_1)}(t_1) = \gamma$.  Finally, we can observe that $j$ becoming known at $s$ implies $e_j(s) = (1-\varepsilon) p_j$. Hence, $e_j(t_1) = p_j = \frac{e_j(s)}{1-\varepsilon} \le \frac{1}{1-\varepsilon} \cdot \gamma$. \qedhere
\end{enumerate}
\end{proof}

Recall that $B(t_0)$ is the set of the $\factor \delta_{t_0}$ largest jobs in $\ALG(t_0)$. Let $D = B(t_0) \setminus \ALG(t_1)$ denote the (possibly empty) set of jobs in $B(t_0)$ that $\ALG$ completes by time $t_1$.  Let $O^+ = \OPT(t,t_1) \setminus \OPT(t,t_0)$ denote the new jobs that enter $\OPT(t,t_1)$.

Our strategy for proving~\Cref{lem:fastforward} is to construct a set $S \subseteq \ALG(t_1)$ with
    \begin{enumerate}[noitemsep,nolistsep]
        \item $|S| \le \factor \delta_{t_0} +  \factor \cdot |O^+| = \factor \delta_{t_1}$ and
        \item $\vol_S(t_1) \ge \vol^*_{\OPT(t,t_0)}(t_1) + \vol^*_{O^+}(t_1) = \vol^*_{\OPT(t,t_1)}(t_1)$.
    \end{enumerate}

    If we find such a set $S$, then we immediately get $\vol_{B(t_1)}(t_1) \ge \vol_S(t_1) \ge \vol^*_{\OPT(t,t_1)}(t_1)$, and we are done. We now construct such a set $S$:
    \begin{enumerate}
        \item Let $S_1 := (B(t_0) \setminus D) \cup (O^+ \cap \ALG(t_1))$.
        \item Let $S_2$ denote an arbitrary subset of $\ALG(t_1) \setminus S_1$ with $$|S_2| = \ceil{\frac{1}{\varepsilon}} \cdot  |O^+ \setminus \ALG(t_1)| + |D| + \left(\ceil{\frac{1}{\eps}} -1\right) \cdot |O^+ \cap \ALG(t_1)|.$$ 
        If no such subset exists, then let $S_2 = \ALG(t_1) \setminus S_1$. 
        \item Define $S := S_1 \cup S_2$.
    \end{enumerate}
    It remains to show that $S$ satisfies the two requirements above. We first show that $S$ is sufficiently small.
    \begin{lemma}
    \label{obs:ff:cardinality}
        It holds that
        $|S| \le \factor \delta_{t_0} +  \factor |O^+| = \factor \delta_{t_1}$.
    \end{lemma}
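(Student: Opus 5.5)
This is a counting argument that follows the construction of $S = S_1 \cup S_2$. I will bound $|S_1|$ and $|S_2|$ separately and compare the sum with $\factor \delta_{t_0} + \factor |O^+|$.

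For $S_1 = (B(t_0)\setminus D) \cup (O^+ \cap \ALG(t_1))$:
\begin{itemize}
\item Since $D \subseteq B(t_0)$ and $|B(t_0)| \le \factor \delta_{t_0}$ (with equality unless $\ALG(t_0)$ has fewer jobs), we get $|B(t_0)\setminus D| \le \factor\delta_{t_0} - |D|$.
\item Therefore $|S_1| \le \factor \delta_{t_0} - |D| + |O^+ \cap \ALG(t_1)|$.
\end{itemize}
If $|B(t_0)| < \factor\delta_{t_0}$, then $B(t_0)=\ALG(t_0)$. The same inequality still holds, because we only use $|B(t_0)| \le \factor\delta_{t_0}$.

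For $S_2$, the construction gives $|S_2| \le \factor |O^+\setminus \ALG(t_1)| + |D| + (\factor-1)|O^+\cap\ALG(t_1)|$. In the fallback case $S_2 = \ALG(t_1)\setminus S_1$, the set $S_2$ is no larger than this target size, so the bound still holds.

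Adding the two bounds, the $|D|$ terms cancel. The terms in $|O^+\cap\ALG(t_1)|$ combine to $\factor |O^+\cap \ALG(t_1)|$. Together with $\factor|O^+\setminus\ALG(t_1)|$ this gives exactly $\factor |O^+|$, so
\[
|S| \le |S_1|+|S_2| \le \factor\delta_{t_0} + \factor|O^+| .
\]

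For the final equality $\factor\delta_{t_0} + \factor|O^+| = \factor\delta_{t_1}$, I need $\delta_{t_1} = \delta_{t_0} + |O^+|$. This requires $\OPT(t,t_0) \subseteq \OPT(t,t_1)$, since then $\OPT(t,t_1)$ is the disjoint union of $\OPT(t,t_0)$ and $O^+$.
\begin{itemize}
\item Take a job alive in $\OPT$ at both $t_0$ and $t$.
\item It has been released by $t_0 \le t_1$.
\item It is unfinished at $t \ge t_1$, so it is unfinished at $t_1$.
\item Hence it is alive at $t_1$.
\end{itemize}

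The only point needing care is $|B(t_0)\setminus D| = |B(t_0)| - |D|$, which holds because $D \subseteq B(t_0)$ by definition. Everything else is bookkeeping.
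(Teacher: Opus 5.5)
Your proof is correct and matches the paper's argument. You sum the bounds on $|S_1|$ and $|S_2|$ using $D \subseteq B(t_0)$ and $|B(t_0)| \le \lceil 1/\varepsilon\rceil \delta_{t_0}$, so the $|D|$ terms cancel and the $O^+$ terms combine to $\lceil 1/\varepsilon\rceil |O^+|$. Your explicit check that $\OPT(t,t_0) \subseteq \OPT(t,t_1)$, giving $\delta_{t_1} = \delta_{t_0} + |O^+|$, just spells out what the paper attributes to ``the definition of $\delta_{t_1}$''.
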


    \begin{proof}
    Using the definition of $S$, we can conclude
        \begin{align*}
            |S| &\le |B(t_0)\setminus D| +|O^+ \cap \ALG(t_1)| + \ceil{\frac{1}{\varepsilon}} \cdot |O^+ \setminus \ALG(t_1)| + |D| + \left(\ceil{\frac{1}{\varepsilon}}-1\right) \cdot |O^+ \cap \SLF(t_1)|\\
            &= |B(t_0)| + \ceil{\frac{1}{\varepsilon}} \cdot  |O^+ \setminus \ALG(t_1)| + \ceil{\frac{1}{\varepsilon}} \cdot |O^+ \cap \ALG(t_1)|\\
            &= |B(t_0)| + \ceil{\frac{1}{\varepsilon}} \cdot |O^+|\\
            &\le \ceil{\frac{1}{\varepsilon}} \delta_{t_0} + \ceil{\frac{1}{\varepsilon}} \cdot |O^+| = \ceil{\frac{1}{\varepsilon}} \delta_{t_1} \ ,
        \end{align*}
        where the last inequality follows from the definition of $B(t_0)$, and the last equality follows from the definition of $\delta_{t_1}$.
        This completes the proof.
    \end{proof}

    It remains to show that the volume of $S$ is sufficiently large, i.e., $\vol_S(t_1) \ge \vol^*_{\OPT(t,t_0)}(t_1) + \vol^*_{O^+}(t_1)$. First, note that if $\SLF(t_1) \setminus S_1$ is too small in the second step of the construction of $S$, then the statement holds trivially.

    \begin{lemma}
        \label{obs:ff:fewelements}
        If $S_2 = \SLF(t_1)\setminus S_1$, then $\vol_S(t_1) \ge \vol^*_{\OPT(t,t_1)}(t_1)$.
    \end{lemma}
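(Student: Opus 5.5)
If $S_2 = \SLF(t_1)\setminus S_1$, then $S = S_1 \cup S_2 = \SLF(t_1)$. The plan is to compare total remaining volumes of the two schedules at time $t_1$. The left-hand side becomes $\vol_S(t_1) = \vol_{\SLF(t_1)}(t_1) = \vol(t_1)$, the total remaining volume of the algorithm at time $t_1$.

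First we use the no-idling assumption. The paper observes that since neither $\SLF$ nor $\opt$ idles unnecessarily (and we assumed $\SLF$ never idles before $t$), both schedules have processed the same amount of work by time $t_1$. Both have also received the same jobs, so $\vol(t_1)$ equals the total remaining volume of $\opt$ at $t_1$, namely $\vol^*_{\opt(t_1)}(t_1)$.

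It remains to note that $\opt(t,t_1) = \opt(t)\cap\opt(t_1) \subseteq \opt(t_1)$ and that remaining volumes are nonnegative. Hence
\[
\vol_S(t_1) = \vol(t_1) = \vol^*_{\opt(t_1)}(t_1) \ge \vol^*_{\opt(t,t_1)}(t_1),
\]
which is the claim. The bound $|S|\le \factor\delta_{t_1}$ from \Cref{obs:ff:cardinality} is not needed here; it is only used afterwards to pass to $B(t_1)$.

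The only subtle point is the equality of total volumes at $t_1$. This requires that $\opt$ also does not idle on $[0,t_1]$. That follows because $\opt$ may be assumed to be work-conserving, and the algorithm is busy throughout, so the total released work is positive at all times. I would spell this out in one sentence, appealing to the remark made right after \Cref{lem:volume-bound}.
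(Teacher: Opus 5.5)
Your proof is correct and follows the paper's argument exactly: $S=\SLF(t_1)$ (using $S_1\subseteq\SLF(t_1)$, since $B(t_0)\setminus D = B(t_0)\cap\SLF(t_1)$). Then, by the no-idling assumption, $\vol_{\SLF(t_1)}(t_1)=\vol^*_{\OPT(t_1)}(t_1)$, and finally $\vol^*_{\OPT(t_1)}(t_1)\ge\vol^*_{\OPT(t,t_1)}(t_1)$ because $\OPT(t,t_1)\subseteq\OPT(t_1)$.
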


    \begin{proof}
        If $S_2 = \SLF(t_1)\setminus S_1$, then $S = \SLF(t_1)$ and, thus, $\vol_{S}(t_1) = \vol_{\SLF(t_1)}(t_1)$. Using the assumption that the machine never idles, this gives us
        $\vol_{S}(t_1) = \vol_{\SLF(t_1)}(t_1) = \vol^*_{\OPT(t_1)}(t_1) \ge \vol^*_{\OPT(t,t_1)}(t_1)$.
    \end{proof}    

    So we assume $|S_2| < |\SLF(t_1)\setminus S_1|$ for the rest of the proof. We continue by giving two separate lower bounds on the volumes $\vol_{S_1}(t_1)$ and $\vol_{S_2}(t_1)$ in the following \Cref{lem:vol:lb:1,lem:vol:lb:2}.

    \begin{lemma}
    \label{lem:vol:lb:1}
It holds that
    $$\vol_{S_1}(t_1) \ge \vol^*_{\OPT(t,t_0) \cup O^+}(t_1) - \gamma \cdot |O^+ \cap \SLF(t_1)| - \frac{1}{1-\varepsilon} \cdot \gamma \cdot |O^+ \setminus \SLF(t_1)| - \threshold \cdot \gamma \cdot |D| \ .$$
    \end{lemma}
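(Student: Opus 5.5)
We split $S_1$ into its two parts and bound each against the matching part of $\OPT(t,t_0)\cup O^+$ at time $t_1$. The starting point is the inductive hypothesis $\vol_{B(t_0)}(t_0) \ge \vol^*_{\OPT(t,t_0)}(t_0)$. We track how both sides change from $t_0$ to $t_1$. On OPT's side, volume only decreases, so $\vol^*_{\OPT(t,t_0)}(t_1) \le \vol^*_{\OPT(t,t_0)}(t_0)$. On the algorithm's side we must control how much volume of $B(t_0)$ is lost by time $t_1$. There are two kinds of loss: jobs in $D$ are completed, and jobs in $B(t_0)\setminus D$ may be partially processed.

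\textbf{Step 1: volume lost by $B(t_0)$.} Each job $j \in B(t_0)$ is in one of three situations.
\begin{itemize}
\item If $j$ is frozen, i.e.\ $j\in F(t_0)=U(t_0)$, it is untouched, so $r_j(t_1)=r_j(t_0)$.
\item If $j\in K(t_0)$, then by \Cref{obs:switching}(i),(ii) it is either untouched (when it survives to $t_1$) or it is in $D$ with $r_j(t_0)\le \threshold\gamma$.
\item Other unknown jobs at $t_0$ do not exist, because $U(t_0)=F(t_0)$.
\end{itemize}
Hence $\vol_{B(t_0)\setminus D}(t_1) = \vol_{B(t_0)}(t_0) - \vol_D(t_0) \ge \vol^*_{\OPT(t,t_0)}(t_1) - \threshold\gamma|D|$.

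\textbf{Step 2: the jobs in $O^+$.} These jobs are released in $(t_0,t_1]$, so $O^+\subseteq J_{new}$ up to jobs released exactly at $t_1$, which we treat the same way. We compare $\vol_{O^+\cap\SLF(t_1)}(t_1)$ with $\vol^*_{O^+}(t_1)$ job by job.
\begin{itemize}
\item For $j\in O^+\cap \SLF(t_1)$: by \Cref{obs:switching}(ii), $j$ is not known at $t_1$. So $j$ is unknown and non-frozen, and by (iii) its elapsed time is $e_j(t_1)\le\gamma$ (it equals $\gamma$, or is $0$ if $j$ is released at $t_1$). Therefore $r_j(t_1)=p_j-e_j(t_1)\ge p_j-\gamma\ge r^*_j(t_1)-\gamma$.
\item For $j\in O^+\setminus\SLF(t_1)$: the algorithm contributes nothing, but by (iv) we have $r^*_j(t_1)\le p_j\le \frac{1}{1-\eps}\gamma$. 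So this job's OPT volume is at most $\frac{1}{1-\eps}\gamma$.
\end{itemize}

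\textbf{Step 3: combining.} Since $B(t_0)\setminus D$ and $O^+\cap\SLF(t_1)$ may overlap, we need $S_1$ to be counted as a union without losing volume. Here we observe that they are disjoint. Every job of $B(t_0)$ was released by $t_0$, while $O^+$ consists of jobs that are not in $\OPT(t,t_0)$ but are in $\OPT(t,t_1)$. Such jobs were not alive in OPT at $t_0$ yet are alive at $t_1$, so they must be released after $t_0$. Summing the bounds from Steps 1 and 2 then gives the claimed inequality.

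The main obstacle is making the bookkeeping at the boundary precise. One issue is handling jobs released exactly at $t_0$ or $t_1$ via the $t^-$ conventions. The other is verifying that known jobs in $B(t_0)\setminus D$ are genuinely untouched. This is where \Cref{obs:switching}(ii) is essential; without it, partial processing of surviving known jobs would create an unbounded loss.
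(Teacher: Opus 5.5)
Your proposal is correct and follows the paper's proof essentially step for step. You use the fact that $B(t_0)\setminus D$ is untouched, since its jobs are frozen or known (Switching (ii)). You combine the induction hypothesis with monotonicity of $\OPT$'s volume, and bound $D$, $O^+\cap\SLF(t_1)$ and $O^+\setminus\SLF(t_1)$ via Switching (i), (iii) and (iv) respectively. Your explicit argument that jobs of $O^+$ are released after $t_0$, and hence disjoint from $B(t_0)$, usefully justifies splitting $\vol_{S_1}(t_1)$ into two sums, a step the paper takes without comment.
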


    \begin{proof}
        The lemma follows from the following chain of inequalities. We argue  below that these inequalities indeed hold:
        \begin{align}
            \vol_{S_1}(t_1) &=\vol_{B(t_0)\setminus D}(t_1) + \vol_{O^+ \cap \ALG(t_1)}(t_1) \nonumber\\
            &= \vol_{B(t_0)\setminus D}(t_0) + \vol_{O^+ \cap \ALG(t_1)}(t_1)  \label{volS2:1}\\
            &\ge \vol^*_{\OPT(t,t_0)}(t_1) - \vol_D(t_0) + \vol_{O^+ \cap \ALG(t_1)}(t_1) \label{volS2:2}\\
            &\ge \vol^*_{\OPT(t,t_0)}(t_1) - \vol_D(t_0) + \vol^*_{O^+ \cap \SLF(t_1)}(t_1) -  \gamma \cdot |O^+ \cap \SLF(t_1)|  \label{volS2:3}\\
            &= \vol^*_{\OPT(t,t_0)}(t_1) + \vol^*_{O^+}(t_1) -  \vol_D(t_0) - \gamma \cdot|O^+ \cap \SLF(t_1)|  - \vol^*_{O^+ \setminus \SLF(t_1)}(t_1)\nonumber\\
            &\ge \vol^*_{\OPT(t,t_0) \cup O^+}(t_1) -  \vol_D(t_0) - \gamma\cdot |O^+ \cap \SLF(t_1)|  - \frac{1}{1-\varepsilon}  \gamma \cdot |O^+ \setminus \SLF(t_1)| \label{volS2:4}\\
            &\ge \vol^*_{\OPT(t,t_0) \cup O^+}(t_1) -  \gamma\cdot |O^+ \cap \SLF(t_1)|  - \frac{1}{1-\varepsilon} \cdot \gamma \cdot |O^+ \setminus \SLF(t_1)| - \threshold \cdot \gamma \cdot |D|.
            \label{volS2:5}
        \end{align}
    We separately argue that the equations and inequalities above indeed hold:
    \begin{itemize}
        \item Equation~\eqref{volS2:1} holds because each $j \in B(t_0) \setminus D$ is either in $K(t_0)$ or in $F(t_0)$ by assumption that $U(t_0)=F(t_0)$ of~\Cref{lem:fastforward}. In either case, we have $r_j(t_0) = r_j(t_1)$. If $j \in K(t_0)$, then $r_j(t_0) = r_j(t_1)$ holds by \Cref{obs:switching}(ii). If $j \in F(t_0)$, then  $r_j(t_0) = r_j(t_1)$ holds by the definition of frozen jobs. The fact that $r_j(t_0) = r_j(t_1)$ for all $j \in B(t_0) \setminus D$ implies $\vol_{B(t_0)\setminus D}(t_1) = \vol_{B(t_0)\setminus D}(t_0)$.
        \item Inequality~\eqref{volS2:2} uses that $\vol_{B(t_0)}(t_0) \ge \vol^*_{\OPT(t,t_0)}(t_0)$ holds by the assumption of~\Cref{lem:fastforward}, and since $\vol^*_{\OPT(t,t_0)}(t_0) \ge \vol^*_{\OPT(t,t_0)}(t_1)$.
        \item Inequality~\eqref{volS2:3} exploits that $e_j(t_1) = \gamma$ for each $j \in J_{new} \cap \SLF(t_1)$.
        To see this, first observe that 
        $K(t_1) \subseteq K(t_0)$ by \Cref{obs:switching}(ii).
        Since $K(t_0) \cap J_{new} = \emptyset$, $F(t_0) \cap J_{new} = \emptyset$, and by Condition 2 of \Cref{lem:fastforward}, it must be that $J_{new} \cap \SLF(t_1) \subseteq U(t_1) \setminus F(t_1)$. 
        Thus, \Cref{obs:switching}(iii) gives that $e_j(t_1) = \gamma$ for each $j \in J_{new} \cap \SLF(t_1)$.
        Hence,
        \begin{align*}
           \vol^*_{O^+ \cap \SLF(t_1)}(t_1) &\le \sum_{j \in O^+ \cap \SLF(t_1)} p_j \\
           &= \sum_{j \in O^+ \cap \SLF(t_1)} r_j(t_1) + e_j(t_1) \\
           &= \vol_{O^+ \cap \SLF(t_1)}(t_1) + |O^+ \cap \SLF(t_1)| \cdot \gamma.
        \end{align*}
        Therefore,
        $\vol_{O^+ \cap \SLF(t_1)}(t_1) \ge \vol^*_{O^+ \cap \SLF(t_1)}(t_1) - |O^+ \cap \SLF(t_1)| \cdot \gamma$.
        \item Inequality~\eqref{volS2:4} follows since each $j \in O^+ \setminus \ALG(t_1)$ satisfies $j \in J_{new} \setminus \ALG(t_1)$ by definition of~$O^+$. Then, \Cref{obs:switching}(iv) implies $p_j \le\frac{1}{1-\varepsilon} \gamma$ for all such $j \in J_{new} \setminus \ALG(t_1)$. Thus,
        $$
        \vol^*_{O^+\setminus \ALG(t_1)}(t_1) \le \sum_{j \in O^+\setminus \ALG(t_1)} p_j \le |O^+\setminus \ALG(t_1)| \cdot \frac{1}{1-\varepsilon} \cdot \gamma.
        $$
        \item Finally, Inequality~\eqref{volS2:5} follows since each $j \in D$ is in $K(t_0) \setminus \SLF(t_1)$. Hence, \Cref{obs:switching}(i) implies $r_j(t_0) \le \threshold \cdot \gamma$, and we can conclude with
        $$
        \vol_D(t_0) = \sum_{j \in D} r_j(t_0) \le \threshold \cdot \gamma \cdot |D|.
        $$
    \end{itemize}
    This completes the proof of the lemma.
    \end{proof}

    Having~\Cref{lem:vol:lb:1}, we only need to show that the volume of $S_2$ is large enough to cover the remaining volume. This is formalized in the following lemma.

    \begin{lemma}
        \label{lem:vol:lb:2}
        $\vol_{S_2}(t_1) \ge \gamma \cdot |O^+ \cap \SLF(t_1)| +\frac{1}{1-\varepsilon} \cdot \gamma \cdot |O^+ \setminus \SLF(t_1)| + \threshold \cdot \gamma \cdot |D|$
    \end{lemma}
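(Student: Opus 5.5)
We need a lower bound on the volume of $S_2$. The plan is to show that every job in $S_2$ has remaining time at least $\threshold \gamma$ at time $t_1$, and then to count.

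\textbf{Step 1: every job alive at $t_1$ is large.} Consider an arbitrary job $j \in \SLF(t_1)$. There are three cases.
\begin{itemize}
\item If $j \in K(t_1)$, then $j$ is known and alive at $t_1$ while the leader $L(t_1)$ is touched at $t_1$. The definition of \ALG then gives $r_j(t_1) = \eta_j(t_1) > \threshold \gamma$; this is the same argument as in \Cref{obs:switching}(i).
\item If $j \in U(t_1)\setminus F(t_1)$, then \Cref{obs:switching}(iii) gives $e_j(t_1)=\gamma$. Since $j$ is unknown, $r_j(t_1) > \eps p_j$, that is, $r_j(t_1) > \eps(e_j(t_1)+r_j(t_1))$. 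Rearranging yields $r_j(t_1) > \threshold \gamma$.
\item If $j \in F(t_1) = F(t_0)$, then $j$ is unknown, so $r_j(t_1) > \threshold e_j(t_1)$. This alone does not give a bound of the form $\threshold\gamma$, because $e_j(t_1)$ could be small. This is the delicate case, and I address it next.
\end{itemize}

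\textbf{Step 2: frozen jobs.} For $j\in F(t_0)$, I want $e_j(t_1) \ge \gamma$. My first idea is to use the fact that $j$ is frozen while new unknown jobs are processed, so those new jobs must have had estimates at most $\eta_j$ whenever they were touched. However, frozen means that $j$ is never touched again before $t$, and the leader at $t_1$ has elapsed time $\gamma$ and will be touched later. Since \ALG always prefers the smaller estimate, any unknown job touched while $j$ waits must satisfy $e \le e_j$. This gives $\gamma \le e_j(t_1)$, and hence $r_j(t_1) > \threshold e_j(t_1) \ge \threshold\gamma$. Concretely, the argument is the one in \Cref{lem:freeze-leaders}: $L(t_1)$ is touched at $t_1$ while $j$ is present and untouched, so $\gamma = \eta$-comparable $\le e_j(t_1)$.

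If this comparison fails in some edge case, I would fall back on showing that $S_2$ can avoid frozen jobs. I expect this step to be the main obstacle, but the first approach seems right. Equality ties are resolved in favor of the strict inequality, or a non-strict $\ge$ suffices.

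\textbf{Step 3: counting.} Since $|S_2| < |\SLF(t_1)\setminus S_1|$, the set $S_2$ has exactly the prescribed size. By Steps 1 and 2,
\[
\vol_{S_2}(t_1) \ge \threshold\gamma\Bigl(\ceil{\tfrac1\eps}|O^+\setminus\SLF(t_1)| + |D| + \bigl(\ceil{\tfrac1\eps}-1\bigr)|O^+\cap\SLF(t_1)|\Bigr).
\]
It then remains to compare coefficients, using $\ceil{1/\eps}\ge 1/\eps$:
\begin{itemize}
\item For $O^+\setminus\SLF(t_1)$: $\threshold\cdot\frac1\eps = \frac{1}{1-\eps}$, which matches the required coefficient.
\item For $D$: the coefficient is exactly $\threshold\gamma$, as required.
\item For $O^+\cap\SLF(t_1)$: $\threshold(\frac1\eps-1) = \frac{\eps}{1-\eps}\cdot\frac{1-\eps}{\eps} = 1$, which gives $\gamma$ per job, as required.
\end{itemize}
Summing the three contributions yields the claimed bound of \Cref{lem:vol:lb:2}.
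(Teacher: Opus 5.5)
Your proposal is correct and follows the paper's proof. Both arguments show that every job of $\SLF(t_1)$ has remaining time at least $\threshold\gamma$, use that $S_2$ has exactly the prescribed size, and compare coefficients via $\threshold\ceil{1/\eps}\ge\frac{1}{1-\eps}$ and $\threshold(\ceil{1/\eps}-1)\ge 1$.

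Your separate handling of frozen jobs is sound: the leader touched at $t_1$ has minimum estimate among \emph{all} unknown jobs, so $e_j(t_1)\ge\gamma$ for $j\in F(t_1)$. This covers a case the paper passes over, since the paper cites \Cref{obs:switching}(iii), which is stated only for $U(t_1)\setminus F(t_1)$.
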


    \begin{proof}
        Recall that $S_2$ is a subset of $\SLF(t_1)\setminus S_1$ with $|S_2| = \factor \cdot  |O^+ \setminus \ALG(t_1)| + |D| + \left(\factor -1\right) \cdot |O^+ \cap \ALG(t_1)|$. Next, observe that each $j \in \SLF(t_1)$ satisfies  
        $$
            r_j(t_1) \ge \threshold \cdot  \gamma.
        $$
        For $j \in U(t_1)$, the fact that $j$ is unknown at $t_1$ and~\Cref{obs:switching}(iii) imply $r_j(t_1) \ge  \threshold e_j(t_1) = \threshold \gamma$.
        For $j \in K(t_1)$,~\Cref{obs:switching}(ii) implies $j \in K(t_0)$  and, therefore, $r_j(t_1) = r_j(t_0) > \threshold \gamma$ (\Cref{obs:switching}(i)).

        Partition $S_2$ into three sets $A_1$, $A_2$ and $A_3$ with 
        $|A_1| = \factor |O^+ \setminus \SLF(t_1)|$ , $|A_2| = |D|$, and $|A_3| = \left(\factor -1\right) \cdot |O^+ \cap \ALG(t_1)|$. Then,
         \begin{align*}
            \vol_{S_2}(t_1) &= \vol_{A_1}(t_1) + \vol_{A_2}(t_1) + \vol_{A_3}(t_1)\\
            &\ge \gamma \cdot \threshold \cdot \ceil{\frac{1}{\varepsilon}} \cdot  |O^+ \setminus \ALG(t_1)| + \gamma \cdot \threshold \cdot  |D| + \gamma \cdot \threshold \cdot  \left(\ceil{\frac{1}{\eps}} -1\right) \cdot |O^+ \cap \ALG(t_1)|\\
            &\ge \gamma \cdot \frac{1}{1-\varepsilon} \cdot |O^+ \setminus \ALG(t_1)|  + \gamma \cdot \threshold \cdot |D| + \gamma  \cdot |O^+ \cap \ALG(t_1)| \ ,
        \end{align*}
        since $\frac{\eps}{1-\eps} \cdot  \left(\factor -1\right) \ge 1$ for every $0< \varepsilon < 1$. 
    \end{proof}

    Putting everything together, we can finish the proof of the Fast Forward \Cref{lem:fastforward}.

    \begin{proof}[Proof of~\Cref{lem:fastforward}]
        By~\Cref{obs:ff:cardinality}, the set $S \subseteq \SLF(t_1)$ as constructed above satisfies $|S| \le \factor  \delta_{t_1}$. The~\Cref{obs:ff:fewelements} and~\Cref{lem:vol:lb:1,lem:vol:lb:2} imply $\vol_{S}(t_1) \ge \vol^*_{\OPT(t,t_1)}(t_1)$.
        Since $B(t_1)$ is the set of the $\factor  \delta_{t_1}$ largest elements in $\SLF(t_1)$, we conclude $\vol_{B(t_1)}(t_1) \ge \vol_{S}(t_1) \ge \vol^*_{\OPT(t,t_1)}(t_1)$.
    \end{proof}

\subsection{Suffix Carving Lemma}
\label{sec:suffix-carving}

We next state and prove the Suffix Carving Lemma, which we can use for time intervals when SLF only works on known jobs.

    \begin{lemma}[Suffix Carving]
    \label{lem:suffix:carving}
    Let $t_0 < t_1 \le t$ be two points in time such that
    \begin{enumerate}[noitemsep,nolistsep]
        \item $\vol_{B(t_0)}(t_0) \ge \vol^*_{\OPT(t,t_0)}(t_0)$, and
        \item $\SLF$ only works on known jobs during $[t_0,t_1]$.
    \end{enumerate}
    Then, $\vol_{B(t_1)}(t_1) \ge \vol^*_{\OPT(t,t_1)}(t_1)$.
    \end{lemma}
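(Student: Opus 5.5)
The plan is to show that $g(s) := \vol_{B(s)}(s) - \vol^*_{\OPT(t,s)}(s)$ never becomes negative on $[t_0,t_1]$. This is a Schrage/SRPT-style exchange argument, using that during $[t_0,t_1]$ \SLF behaves exactly like SRPT restricted to the known jobs.

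\textbf{Step 1: no arrivals or status changes inside the interval.} A job arriving at some $s \in (t_0,t_1)$ is unknown with $\eta_j(s) = \threshold \cdot 0 = 0$. Every active known job has $\eta_i = r_i > 0$, so \SLF would immediately work on the new unknown job. This contradicts Condition~2, so no job arrives in $(t_0,t_1)$. Since unknown jobs are never touched, no job switches from unknown to known inside the interval either. Consequently $\OPT(t,s)$ does not change for $s \in [t_0,t_1)$: a job of $\OPT(t)$ that is active at $t_0$ stays active until $t$, and nothing new is released. Hence $\delta_s = \delta_{t_0}$ and $|B(s)| = \factor\delta_{t_0}$ on $[t_0,t_1)$. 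A possible arrival exactly at $t_1$ is handled separately in Step~3.

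\textbf{Step 2: the invariant is preserved while \SLF works on known jobs.} Consider an infinitesimal step $[s,s+ds]$ in which \SLF processes a known job $j$ with minimum $r_j(s)$. Split into two cases.
\begin{itemize}
\item If $|\SLF(s)| > \factor\delta_s$, break ties so that $j \notin B(s)$; this is possible because $j$ has the smallest remaining volume. Then every job of $B(s)$ keeps its volume, and the top-$\factor\delta_s$ volume can only increase when $j$ shrinks or completes. So $\vol_{B(s+ds)}(s+ds) \ge \vol_{B(s)}(s)$.
\item If $|\SLF(s)| \le \factor\delta_s$, then $B(s) = \SLF(s)$. By non-idling, $\vol_{B(s)}(s) = \vol(s) = \vol^*_{\OPT(s)}(s) \ge \vol^*_{\OPT(t,s)}(s)$, so the invariant holds directly at every such time.
\end{itemize}
In the first case the right-hand side $\vol^*_{\OPT(t,s)}(s)$ is nonincreasing, since the set is fixed and \OPT only decreases remaining volumes. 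Combining the two cases, the inequality propagates from $t_0$ up to $t_1^-$. To make this rigorous I will either discretize at the completion times of jobs in \SLF, or argue on a maximal interval where the first case holds, starting from a point where the inequality is already known.

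\textbf{Step 3: instantaneous changes at $t_1$.} If jobs arrive at $t_1$, each such $j \in \OPT(t,t_1)$ raises $\delta$ by one and adds $p_j$ to the right-hand side. At the same time $j \in \SLF(t_1)$ with $r_j(t_1) = p_j$. The size of $B$ grows by $\factor \ge 1$, so the set $B(t_1^-) \cup \{j\}$ is admissible and has exactly the required extra volume. Arrivals not in $\OPT(t,t_1)$ only enlarge $\SLF(t_1)$, which can only increase the top-$x$ volume.

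I expect Step~2 to be the main obstacle. The difficulty is the tie-breaking that excludes the processed job from $B(s)$, together with making the continuous-time argument clean when completions in \SLF shrink $\SLF(s)$ and move us between the two cases.
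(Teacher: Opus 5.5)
Your proof is correct and is essentially the paper's argument in pointwise form. The paper splits on whether \SLF ever touches a job of $B(t_0)$. If not, those volumes stay unchanged while the \OPT side only shrinks; if so, SRPT order on the known jobs forces $|\SLF(t_1)|\le|B(t_0)|$, hence $B(t_1)=\SLF(t_1)$ and non-idling finishes. This is the global version of your split on $|\SLF(s)|>\factor\delta_s$, and your explicit tie-breaking handles ties more carefully than the paper does. Your Step~2 worry also goes away: without arrivals $|\SLF(s)|$ is nonincreasing, so the first case holds exactly on an initial segment $[t_0,\tau)$, and on that segment the top-$\factor\delta_{t_0}$ volume is constant.

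Two small points. First, state why the processed known job is the smallest among \emph{all} active jobs, unknown ones included: for unknown $i$ we have $\eta_i=\threshold e_i<\eps p_i<r_i$. Second, your Step~3 is vacuous under Condition~2 as stated. It does harmlessly cover the way the lemma is invoked in Case~1 of the proof of \Cref{lem:volume-bound}, where $t_1$ may be an arrival time.
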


    \begin{proof}
    The assumption that $\SLF$ works only on known jobs during $[t_0,t_1]$ implies that no new jobs are released during that interval. Then, $\OPT(t,t_0) = \OPT(t,t_1)$, and thus, $\vol^*_{\OPT(t,t_0)}(t_0) \ge \vol^*_{\OPT(t,t_0)}(t_1) \ge \vol^*_{\OPT(t,t_1)}(t_1)$. 

    If $\SLF$ does not touch jobs in $B(t_0)$ during $[t_0,t_1]$, then
    $$
    \vol_{B(t_1)}(t_1)  = \vol_{B(t_0)}(t_0) \ge \vol^*_{\OPT(t,t_0)}(t_0) \ge  \vol^*_{\OPT(t,t_1)}(t_1),
    $$
    by the first assumption of the lemma, and we are done.

    If $\SLF$ at some point touches a job  $j\in B(t_0)$, then, using the fact  that when $\SLF$ works on a known job 
    then it is the smallest in its queue, we conclude that $j$ is the smallest in SLF's queue. This implies that the total number of jobs in $\SLF(t_1)$ is at most $|B(t_0)| \le \factor \delta_{t_0} = \factor \delta_{t_1}$. Hence, $B(t_1) = \SLF(t_1)$. Using the assumption that the machine never idles, we have
    $$
        \vol_{B(t_1)}(t_1) = \vol_{\SLF(t_1)}(t_1) = \vol^*_{\OPT(t_1)}(t_1) \ge \vol^*_{\OPT(t,t_1)}(t_1) \ ,
    $$
    which concludes the proof of the lemma.
    \end{proof}

\subsection{Proof of~\Cref{lem:volume-bound}}
\label{sec:proof-volume-bound}

We prove the lemma by induction over time $t'$ and 
    maintain the invariant that 
    \[
    \vol_{B(t')}(t') \ge \vol^*_{\opt(t,t')}(t') \ .
   \] 
    Since we assume that SLF does not idle until time $t$, the invariant holds trivially for time $0$.
    We now iteratively verify the invariant for times $t_1$ with 
    $t_0 < t_1 \leq t$ until we reach time $t$; 
    see \Cref{fig:induction-proof} for an example.
 Initially $t_0 = 0$. Then we define $t_1$ according to one of the following two cases. For the next iteration, we set $t_0 := t_1$ and repeat until $t_1=t$. It will be clear that $t_0$ always satisfies one of the following conditions. 

\begin{figure}[tb]
    \centering
	\begin{tikzpicture}[xscale=0.65]
	\rectjobT[job1](0,0)(3,0.5)($1$);
	\rectjobT[job2](0,0.5)(3,0.5)($2$);
	\rectjobT[job1](3,0)(2,1)($1$);
	\rectjobT[job2](5,0)(2,1)($2$);
	\rectjobT[job2](7,0)(2,1)($2$);
	\rectjobT[job3](9,0)(2,0.5)($3$);
	\rectjobT[job4](9,0.5)(2,0.5)($4$);
	\rectjobT[job2](11,0)(1,1)($2$);
	\rectjobT[job3](12,0)(2,0.5)($3$);
	\rectjobT[job4](12,0.5)(2,0.5)($4$);
	\rectjobT[job3](14,0)(2,1)($3$);
	\rectjobT[job4](16,0)(2,1)($4$);
	\rectjobT[job5](18,0)(3,1)($5$);
	\rectjobT[job6](21,0)(2,0.5)($6$);
	\rectjobT[job7](21,0.5)(2,0.5)($7$);
	\rectjobT[job6](23,0)(1,1)($6$);
	\rectjobT[job7](24,0)(1,1)($7$);
	
	\begin{scope}[shift={(0,-0.5)}]	
	\draw[thick,->] (0,0) -- (26,0);
	\draw[thick] (0,-0.15) --node[anchor=south,below=4pt] (t0) {$\strut 0$}  (0,0.15);
	\draw[thick] (3,-0.15) --node[anchor=south,below=4pt] (t1) {$\strut s_1$}  (3,0.15);
	\draw[thick] (5,-0.15) --node[anchor=south,below=4pt] (t2) {$\strut C_1$}  (5,0.15);
	\draw[thick] (7,-0.15) --node[anchor=south,below=4pt] (t3) {$\strut s_2$}  (7,0.15);
	\draw[thick] (9,-0.15) --node[anchor=south,below=4pt] (t4) {$\strut r_{3},r_4$}  (9,0.15);
	\draw[thick] (12,-0.15) --node[anchor=south,below=4pt] (t5) {$\strut C_2$}  (12,0.15);
	\draw[thick] (14,-0.15) --node[anchor=south,below=4pt] (t6) {$\strut s_3$}  (14,0.15);
	\draw[thick] (16,-0.15) --node[anchor=south,below=4pt] (t7) {$\strut C_3$}  (16,0.15);
	\draw[thick] (18,-0.15) --node[anchor=south,below=4pt] (t8) {$\strut r_5$}  (18,0.15);
	\draw[thick] (21,-0.15) --node[anchor=south,below=4pt] (t9) {$\strut r_{6},r_7$}  (21,0.15);
	\draw[thick] (23,-0.15) --node[anchor=south,below=4pt] (t10) {$\strut s_6$}  (23,0.15);
	\draw[thick] (24,-0.15) --node[anchor=south,below=4pt] (t11) {$\strut C_6$}  (24,0.15);
	\draw[thick] (25,-0.15) --node[anchor=south,below=4pt] (t12) {$\strut t$}  (25,0.15);
	\end{scope}
	
	\draw[thick,->] (t0) to [bend right=20] node[below] {FF, Case 2(b)} (t3);
	\draw[thick,->] (t3) to [bend right=30] node[below] {SC, Case 1}  (t4);
	\draw[thick,->] (t4) to [bend right=20] node[below] {FF, Case 2(a)} (t8);
	\draw[thick,->] (t8) to [bend right=30] node[below] {FF, Case 2(a)}  (t9);
	\draw[thick,->] (t9) to [bend right=30]node[below] {FF, Case 2}   (t12);
	
	\machine[](0,0)(26,1);
	\end{tikzpicture}
	\caption{Schedule of SLF and corresponding inductive proof of \Cref{lem:volume-bound}. The arrows indicate the application of either the Fast Forward Lemma (FF) or the Suffix Carving Lemma (SC) at the respective time intervals, and the corresponding case in the proof. Here, $s_j$ denotes the point in time when job $j$ becomes known, $r_j$ denotes the arrival time of job $j$, and $C_j$ denotes the completion time of job $j$.}
	\label{fig:induction-proof}
\end{figure}
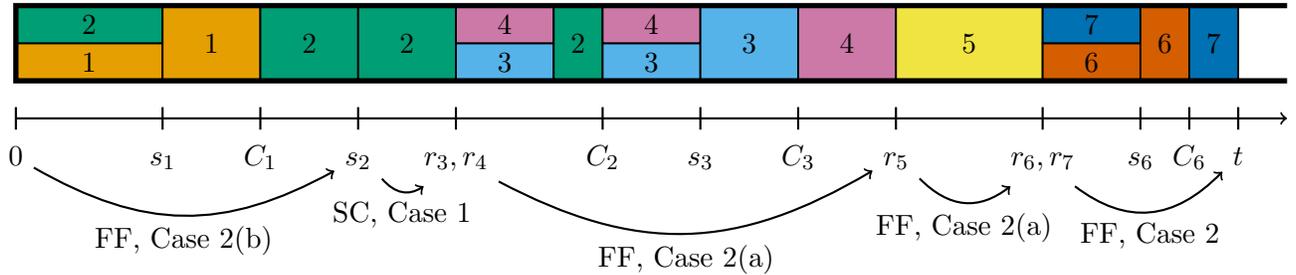

    \begin{enumerate}
        \item If at time $t_0$ all unknown jobs are frozen, that is, $|U(t_0) \setminus F(t_0)| = 0$, then let $t_1$ be the earliest point in time with $t_0 < t_1 \le t$ where there exists at least one unknown non-frozen job, that is, $|U(t_1) \setminus F(t_1)| \geq 1$. 
        If no such point in time exists, then let $t_1 = t$. Since there are no non-frozen unknown jobs during $[t_0,t_1)$ and $\SLF$ does not idle, we know that $\SLF$ only works on known jobs during this interval. Thus, we can apply the Suffix Carving \Cref{lem:suffix:carving} to conclude that the invariant is satisfied at $t_1$. Moreover, note that $t_1$ satisfies the condition of Case 2. So in the next iteration when we set $t_0 = t_1$ we can apply Case 2.
        \item If at $t_0$ an unknown non-frozen job arrives, that is, $|U(t_0) \setminus F(t_0)| \ge 1$ and $|U(t_0^-) \setminus F(t_0^-)| = 0$, 
        then let $t_1$ be the earliest point in time with $t_0 < t_1 \le t$ 
        at which either
        a job becomes frozen, i.e., $F(t_1^-) \neq F(t_1)$
        or
        all unknown jobs are frozen, i.e.,
        $U(t_1) = F(t_1)$ (because jobs became known).
        If such a time $t_1$ exists, it must be $F(t_0) = F(s)$ and $U(s) \setminus U(t_0) \neq \emptyset$ for all $s \in (t_0,t_1]$.
        We now distinguish two cases: 
        \begin{enumerate}
            \item A job becomes frozen at time $t_1$, i.e., $F(t_1^-) \neq F(t_1)$. 
            This happens when a new job arrives at time~$t_1$.
            Let $J_f = F(t_1) \setminus F(t_1^-)$ be the set of jobs that become frozen at time~$t_1$. 
            By \Cref{lem:freeze-leaders}, 
            we have $J_f = U(t_1^-) \setminus F(t_1^-)$ and 
            the leader was touched at time~$t_1^-$.
            Thus, we can apply~\Cref{lem:fastforward} at time~$t_0^-$ to conclude that the invariant is satisfied at $t_1^-$.  Moreover, time~$t_1$ satisfies the condition of Case~2.
            \item The set of frozen jobs remains, but all unknown non-frozen jobs become known at time~$t_1$, that is, $U(t_1) = F(t_1)$ and $U(t_1^-) \setminus F(t_1^-) \neq \emptyset$.
            By the definition of frozen jobs, this means that every job in $U(t_1^-) \setminus F(t_1^-)$ is touched at time~$t_1^-$. In particular, $L(t_1^-)$ is touched at time~$t_1^-$.
            Thus, we can apply~\Cref{lem:fastforward} at time~$t_0^-$ to conclude that the invariant is satisfied at $t_1^-$. Moreover, time~$t_1$ satisfies the condition of Case~1.
        \end{enumerate}      
        If no such point $t_1$ in time exists, then we have $|U(t) \setminus F(t)| \geq 1$. By the definition of frozen jobs, 
        every job that is touched at time $t$ is non-frozen. Since we are not in case 1, the known jobs are not processed at time $t$. Thus, the leader is touched at time $t$.
        This allows us to apply~\Cref{lem:fastforward} at time $t_0^-$ to conclude that the invariant is satisfied at time $t$.
        \end{enumerate}
        This shows that the invariant is satisfied at time $t$, which concludes the proof of the lemma.
        \qed

\printbibliography

\end{document}